\DeclareFontFamily{U}{mathx}{\hyphenchar\font45}
\DeclareFontShape{U}{mathx}{m}{n}{
	<5> <6> <7> <8> <9> <10>
	<10.95> <12> <14.4> <17.28> <20.74> <24.88>
	mathx10
}{}
\DeclareSymbolFont{mathx}{U}{mathx}{m}{n}
\DeclareMathAccent{\widecheck}{0}{mathx}{"71}
\newcommand*{\norm}[1]{\left\|#1\right\|}
\newcommand*{\supp}{\mbox{supp}}
\newcommand*{\dist}{\mbox{dist}}
\newcommand*{\down}{\!\downarrow\!}
\newcommand{\secref}[1]{\S\ref{#1}}
\newcommand{\Z}{\mathbb{Z}}
\newcommand{\F}{\mathbb{F}}
\numberwithin{equation}{section}
\newtheorem{theorem}{Theorem}[section]
\newtheorem{lemma}[theorem]{Lemma}
\newtheorem{proposition}[theorem]{Proposition}
\newtheorem{corollary}[theorem]{Corollary}
\theoremstyle{definition}
\newtheorem{definitoin}[theorem]{Definition}
\begin{document}
	
\title{Good Distance Lattices from High Dimensional Expanders}
\author{
	Tali Kaufman\thanks{Bar-Ilan University, ISRAEL. Email: \texttt{kaufmant@mit.edu}. Research supported in part by ERC and BSF.} \and
	David Mass\thanks{Bar-Ilan University, ISRAEL. Email: \texttt{dudimass@gmail.com}.}}
\maketitle

\begin{abstract}
	We show a new framework for constructing good distance lattices from high dimensional expanders. For error-correcting codes, which have a similar flavor as lattices, there is a known framework that yields good codes from expanders. However, prior to our work, there has been no framework that yields good distance lattices directly from expanders. Interestingly, we need the notion of {\em high dimensional expansion} (and not only one dimensional expansion) for obtaining large distance lattices which are dense.

	Our construction is obtained by proving the existence of bounded degree high dimensional cosystolic expanders over any ring, and in particular over $\Z$. Previous bounded degree cosystolic expanders were known only over $\F_2$. The proof of the cosystolic expansion over any ring is composed of two main steps, each of an independent interest: We show that coboundary expansion over any ring of the links of a bounded degree complex implies that the complex is a cosystolic expander over any ring. We then prove that all the links of Ramanujan complexes (which are called spherical buildings) are coboundary expanders over any ring.
	
	We follow the strategy of~\cite{LMM16} for proving that the spherical building is a coboundary expander over any ring. Besides of generalizing their proof from $\F_2$ to any ring, we present it in a detailed way, which might serve readers with less background who wish to get into the field.
\end{abstract}
	
\section{Introduction}
We show in this work that high dimensional expanders imply lattices with good distance. There are constructions of good error-correcting codes from expanders~\cite{SS96,Spi96,LMSS01}, and since error-correcting codes and lattices are of a similar flavor, it is natural to expect that it would be possible to construct good lattices from expanders. However, prior to our work, no such construction has been known to exist. We provide a new framework for constructing lattices from high dimensional expanders, and show that a certain family of high dimensional expanders can be used in order to construct lattices with good distance.

\paragraph{Error-correcting codes.}
An error-correcting code is a subset of $n$-bit strings $\mathcal{C} \subseteq \{0,1\}^n$ called codewords. In coding theory, a good code has the following two (conflicting) properties: First, any two codewords are far from each other, i.e., many bit flips are required in order to transform one codeword into another. And second, there are many codewords, i.e., $\mathcal{C}$ is dense in $\{0,1\}^n$.

The fact that error-correcting codes and expander graphs are related is well known by now. The idea to construct codes from graphs was initiated by Gallager~\cite{Gal63} already in 1963. Gallager suggested to use a randomly chosen sparse bipartite graph, as explicit expanders did not exist at that time. Sipser and Spielman~\cite{SS96} at their celebrated result used expander graphs for explicit constructions of asymptotically good error-correcting codes, and this idea was taken further by others (for example~\cite{Spi96} and~\cite{LMSS01}).

\paragraph{Lattices.}
Given a real vector space $W$ with a basis $B = \{w_1,\dotsc,w_n\}$, the lattice $\mathcal{L} \subset W$ generated by $B$ is the subgroup of all integer linear combinations of $B$, i.e.,
$$\mathcal{L} = \left\{\sum_{i=1}^{n}a_iw_i \;\Big|\; a_i \in \mathbb{Z}, w_i \in B \right\}.$$

In a similar sense to error-correcting codes, a good lattice has the following two (conflicting) properties: First, any two points in the lattice are far from each other. And second, there are many lattice points, i.e., $\mathcal{L}$ is dense in $W$. Lattices and error-correcting codes do not only sound similar, but also have been proven to be related. See~\cite{CS13} for constructions of lattices from error-correcting codes.

In this work we initiate the study of the following question:\\[5pt]
\textbf{Question.} \emph{Is it possible to construct a good lattice directly from an expander?}\vspace{5pt}

We show that high dimensional expanders can be used in order to construct lattices with large distance which have the potential to be dense. We then show the existence of such expanders, proving the following theorem.
\begin{theorem}[Main]\label{thm:main-informal}
There exists an infinite family of high dimensional expanders which give rise to lattices with good distance.
\end{theorem}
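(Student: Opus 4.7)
The plan is to reduce the construction of good distance lattices to the existence of bounded degree cosystolic expanders over $\Z$. Given a simplicial complex $X$ of sufficiently high dimension, the integer cochain group $C^k(X;\Z)$ carries a natural Hamming/Euclidean weight, and the space of $k$-cocycles $Z^k(X;\Z)$ is a discrete subgroup in which the coboundary image $B^k(X;\Z)$ plays the role of ``trivial'' shifts. Cosystolic expansion over $\Z$ says precisely that every integer $k$-cocycle that is not close to $B^k(X;\Z)$ has large weight, so taking a basis of representatives for $Z^k(X;\Z)/B^k(X;\Z)$ embedded in a suitable orthogonal complement of the coboundaries yields a lattice whose minimum distance is controlled from below by the cosystolic constant. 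The rank of this lattice equals the rank of $H^k(X;\Z)$, which is forced to grow with the complex along the Ramanujan family. Hence Theorem~\ref{thm:main-informal} reduces to producing an infinite family of bounded degree complexes that are cosystolic expanders over $\Z$.

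To obtain such complexes I would follow the two-step local-to-global strategy stated in the abstract. First, I would establish the general implication: if every link of a bounded degree complex $X$ is a coboundary expander over a ring $R$, then $X$ itself is a cosystolic expander over $R$. This will adapt the Gromov-style argument already known for $\F_2$ to an arbitrary $R$: given a cocycle $\alpha$, use the coboundary expansion of each link to produce a local primitive, and glue these local primitives into a global cochain $\beta$ such that $\alpha-\delta\beta$ has controlled weight. The delicate point with general $R$ is that one must track signs and coefficients carefully so that the local primitives agree on overlaps and produce an honest equality $\delta\beta=\alpha-(\text{small remainder})$, rather than an equality only modulo~$2$.

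Second, I would verify the local hypothesis for Ramanujan complexes. Their links are spherical buildings, so it suffices to show that spherical buildings are coboundary expanders over any ring $R$. Here the plan is to follow the strategy of~\cite{LMM16}, upgrading every step from $\F_2$ to a general ring: one contracts cocycles via a cone-type construction inside the building and bounds the weight of the resulting primitive by a constant times the weight of the cocycle. The main obstacle, and the source of most of the technical work, lies exactly in this upgrade. Arguments over $\F_2$ routinely exploit $1=-1$ and the fact that supports add modulo~$2$, whereas over $\Z$ the cone construction produces primitives whose coefficients come with definite signs and multiplicities, and one must bound the Hamming weight of the integer primitive without losing the exact equation $\delta\beta=\alpha$.

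Combining the two steps, Ramanujan complexes yield an infinite family of bounded degree cosystolic expanders over $\Z$, and feeding this family into the lattice construction sketched in the first paragraph produces the desired infinite family of good distance lattices, establishing Theorem~\ref{thm:main-informal}.
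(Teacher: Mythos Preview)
Your high-level architecture matches the paper: build the lattice from integer cohomology, deduce large minimum distance from cosystolic expansion over $\Z$, and obtain cosystolic expansion of Ramanujan complexes by a local-to-global argument whose local input is coboundary expansion of spherical buildings over arbitrary rings (following \cite{LMM16}). So the strategy is right. Three points, however, need correction or sharpening.

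First, your local-to-global hypothesis is incomplete. You state that ``if every link of a bounded degree complex $X$ is a coboundary expander over $R$, then $X$ is a cosystolic expander over $R$.'' The paper needs, and uses, an additional spectral hypothesis: each link must also be a good \emph{skeleton expander} (Definition~\ref{def:skeleton-expansion}). This is not a technicality one can absorb; skeleton expansion is what bounds the set of ``bad'' faces in the fat-faces machinery (Proposition~\ref{pro:skeleton-expansion-implies-small-set-of-bad-faces}) and without it the key Theorem~\ref{thm:existence-of-good-dimension} does not go through.

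Second, your proposed mechanism for the local-to-global step --- ``produce a local primitive in each link and glue them into a global $\beta$'' --- is not the argument the paper runs, and it is not clear it can be made to work as stated: local primitives in overlapping links have no reason to agree, and there is no obvious averaging that preserves the exact equation $\delta\beta=\alpha$ over $\Z$. The paper instead proves a \emph{small-set expansion} statement: any locally minimal cochain of small norm expands (Theorem~\ref{thm:good-links-imply-small-set-expansion}), via a decomposition into local pieces using the fat-faces/ladder machinery and the ``existence of a good dimension'' theorem (Theorem~\ref{thm:existence-of-good-dimension}). Cosystolic expansion then follows formally (Propositions~\ref{pro:small-set-expansion-implies-large-non-trivial-cocycles} and~\ref{pro:small-set-expansion-implies-cocyle-expansion-for-one-dimension-less}). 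If you want to keep the gluing picture you should explain how it avoids the overlap-compatibility problem; otherwise, route the argument through small-set expansion as the paper does.

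Third, you assert that the rank of $H^k(X;\Z)$ ``is forced to grow with the complex along the Ramanujan family.'' The paper does \emph{not} prove this; it explicitly leaves it open and only argues, via the universal coefficient theorem, that largeness of $H^1(X;\F_2)$ forces either $H^1(X;\Z)$ or $H^2(X;\Z)$ to be large. Fortunately this overclaim is not needed for Theorem~\ref{thm:main-informal} as stated, which concerns only good \emph{distance}; you should simply drop it.
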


Let us start by illustrating the strategy we use for constructing a lattices from expanders. Let $G=(V,E)$ be a graph with $k$ connected components, each contains $l$ vertices. For each connected component $S \subset V$, define its characteristic vector $\mathbf{1}_S$ which is $1$ on every vertex $v \in S$ and $0$ on every vertex $v \notin S$. We measure the size of each vector by its hamming weight, i.e., the number of entries which are not $0$. Now, consider the lattice generated by the $\mathbb{Z}$-span of these vectors. This lattice has dimension $k$ and distance $l$. Of course we have $k\cdot l \le |V|$, so we cannot hope to have both dimension and distance linear in $|V|$. Surprisingly, when moving to higher dimensions we can have both at the same time. So we are looking for higher dimensional analogs of graphs and we want that all of their (high dimensional) connected components would be large. This would give us lattices with large distance with the potential to have also large dimension.

The high dimensional analogs of graphs are called \emph{simplicial complexes}. A $d$-dimensional simplicial complex is a $(d+1)$-hypergraph with a closure property, namely, for any $(d+1)$-hyperedge in the complex, all of its subsets are also in the complex. An hyperedge is called a \emph{face} of the complex, and its dimension is one less than its cardinality. For a complex $X$, we denote by $X(0)$ the set of $0$-dimensional faces, which are the vertices, by $X(1)$ the $1$-dimensional faces, which are the edges, and so on up to $X(d)$, which are the top dimensional faces. As an example, a $1$-dimensional complex is just a graph, and a $2$-dimensional complex contains also triangles in addition to vertices and edges. Let us introduce two more definitions regarding high dimensional complexes.
\begin{enumerate}
	\item For any $0 \le k \le d-1$, the \emph{$k$-skeleton} of $X$ is the complex obtained by taking only faces of dimension $\le k$ in $X$. In particular, the $1$-skeleton of $X$ is its underlying graph (ignoring the higher dimensional faces).
	\item For any face $\sigma \in X$, the \emph{link} of $\sigma$ is the subcomplex obtained by taking all faces in $X$ which contain $\sigma$ and removing $\sigma$ from all of them, formally defined as $X_\sigma = \{\tau \setminus \sigma \;|\; \sigma \subseteq \tau \in X \}$. Note that $X_\sigma$ is a subcomplex of dimension $d-|\sigma|$.
\end{enumerate}

\subsection{Cohomology of complexes}
The high dimensional analogs of connected components are captured by the \emph{cohomology groups} of the complex. Let us consider the simple case of $d=1$, so $X=(V,E)$ is a graph. In this case, there is only one cohomology group, which corresponds to the connected components in the graph. A connected component in $X$ is a subset of vertices $S \subseteq V$ such that all edges are either inside $S$ or outside of $S$. The graph is connected if the only subsets $S$ which satisfy this criterion are trivial, i.e., $S=\emptyset$ or $S=V$. Instead of thinking of subsets of vertices, we could consider functions which give an integer value to each vertex, namely, $f:V\to\mathbb{Z}$. The equivalent way of saying that an edge $\{u,v\}$ is inside or outside $S$ is if $f(u) - f(v) = 0$. Thus, the graph is connected if the only functions for which all edges are either inside or outside of them are the constant functions. The $0$-cohomology of $X$, denoted by $H^0(X;\mathbb{Z})$, is the group of functions which vanish on all edges, where we identify functions that differ by a constant function as equivalent. If $X$ is connected, then $H^0(X;\mathbb{Z})$ is trivial, since any function that vanishes on all edges is constant and hence equivalent to the $\mathbf{0}$ function. If $X$ has more than one connected component, then its $0$-cohomology is an abelian group generated by the functions $\mathbf{1}_S$ for each connected component $S \subset V$.

Let us now move to dimension $2$, so $X = (V,E,T)$ is a $2$-dimensional simplicial complex. Now $X$ has two cohomology groups, where its $0$-cohomology is the same as before, and its $1$-cohomology corresponds to functions which vanish on all triangles. For technical reasons, we consider the faces in $X$ with orientation and say that $X$ has all possible orientations of its faces, i.e., if $(u,v) \in E$ then also $(v,u) \in E$. Consider the set of all antisymmetric functions $f:E\to \mathbb{Z}$ which assign an integer value to each edge in the complex (antisymmetric means that $f((u,v)) = -f((v,u))$ for any $(u,v) \in E$). We say that $f$ vanishes on all triangles if for any triangle $(u,v,w) \in T$, $f((u,v)) + f((v,w)) - f((u,w)) = 0$. In this case we also have functions which trivially vanish on all triangles: Take any subset of vertices $S \subseteq V$ and define the function $f$ by $f((u,v)) = \mathbf{1}_S(v) - \mathbf{1}_S(u)$. Then for any triangle $(u,v,w) \in T$,
$$f((u,v)) + f((v,w)) - f((u,w)) =
\mathbf{1}_S(v) - \mathbf{1}_S(u) + \mathbf{1}_S(w) - \mathbf{1}_S(v) - (\mathbf{1}_S(w) - \mathbf{1}_S(u)) = 0.$$
The $1$-cohomology of $X$, denoted by $H^1(X;\mathbb{Z})$, is the group of functions $f:E\to\mathbb{Z}$ which vanish on all triangles, where we identify functions that differ by a trivially vanishing function as equivalent.

In general, the $k$-cohomology captures the amount of functions $f:X(k) \to \Z$ which vanish on all $(k+1)$-dimensional faces. For any $k$ we have functions which trivially vanish on all $(k+1)$-dimensional faces, so again we consider two functions as equivalent if they differ by a trivially vanishing function.

As could be understood from the description in the above paragraphs, the cohomology groups are actually \emph{quotient spaces}. We start with the space of all functions $f:X(k) \to \Z$. Out of that we take the subspace of functions which vanish on all $(k+1)$-dimensional faces. Then we take a quotient space by identifying two functions as equivalent if they differ by a trivially vanishing function. When we construct a lattice from this quotient space, we take as a basis for the lattice a minimal representative from each equivalence class, and take the $\Z$-span of these basis elements.

In the case of graphs we could not have many connected components which are all large, so the $0$-cohomology group could not have many large elements. But for high dimensional complexes, it could be the case that for some $k > 0$, the $k$-cohomology would contain many elements, where all of them are large. Then the question we address is which complexes have only large elements in their cohomology groups. The way we answer this question is through local considerations. Roughly speaking, if every local piece of the complex is expanding, then all the elements in its cohomology groups are large. In the following we explain this criterion.

\subsection{High dimensional expanders}
Our aim in this section is to introduce briefly the notion of expansion in higher dimensions. In recent years, several definitions for high dimensional expansion have been studied. Before presenting them, let us recall expansion in graphs.

\subsubsection{Graph expansion}
\paragraph{Combinatorial expansion.}
Expander graphs have been defined explicitly by Pinsker~\cite{Pin73} as bounded degree graphs which are strongly connected. The strong connectivity of a graph is measured by its Cheeger constant, defined as follows. Let $G=(V,E)$ be a $k$-regular graph. For any subset of vertices $S \subseteq V$, denote by $E(S,\bar{S})$ the set of edges with one endpoint inside $S$ and one endpoint outside of $S$. Note that $G$ is connected if and only if $E(S,\bar{S}) \ne \emptyset$ for any $S \subseteq V$ which is not $\emptyset$ or $V$. The Cheeger constant of $G$ is defined as
$$h(G) = \min_{\emptyset \ne S \subsetneq V}\frac{|E(S,\bar{S})|}{\dist(S,\{\emptyset,V\})},$$
where $\dist(S,\{\emptyset,V\})$ is measured with hamming distance, so $\dist(S,\{\emptyset,V\}) = \min\{|S|,|V\setminus S| \}$. The graph $G$ is said to be an $\varepsilon$-combinatorial expander if $h(G) \ge \varepsilon k$ for some constant $\varepsilon > 0$.

\paragraph{Spectral expansion.}
Another notion of expansion of graphs is captured by their spectral gap. Let $A = A(G)$ be the graph's adjacency matrix, and denote by $\lambda_1 \ge \lambda_2 \ge \dotsb \ge \lambda_{|V|}$ the eigenvalues of $A$. Note that since $G$ is $k$-regular then $\lambda_1 = k$. We say that $G$ is an $\varepsilon$-spectral expander if $\lambda_2/k \le \varepsilon$ for some constant $\varepsilon > 0$. As it turns out, spectral expansion controls the graph's pseudorandom behavior. This is demonstrated by the following mixing lemma.
\begin{lemma}[Expander mixing lemma]\label{lem:expander-mixing-lemma}
Let $G=(V,E)$ be an $\varepsilon$-spectral expander. Then for any subset of vertices $S \subseteq V$,
$$\frac{|E(S)|}{|E|} \le \left(\frac{|S|}{|V|} \right)^2 + \varepsilon\frac{|S|}{|V|},$$
where $E(S)$ denotes the set of edges with both endpoints in $S$.
\end{lemma}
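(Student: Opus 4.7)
The plan is to translate the combinatorial count $|E(S)|$ into a quadratic form in the adjacency matrix $A$, and then control that form by decomposing the indicator vector $\mathbf{1}_S$ into its projection onto the top eigenvector $\mathbf{1}$ (which is an eigenvector of $A$ with eigenvalue $k$ by $k$-regularity) and an orthogonal remainder on which the spectral bound $\lambda_2 \le \varepsilon k$ applies.

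First I would observe that $\langle \mathbf{1}_S, A \mathbf{1}_S \rangle$ equals the number of ordered pairs $(u,v) \in S \times S$ with $\{u,v\} \in E$, which is $2|E(S)|$. Next I would write $\mathbf{1}_S = \alpha \mathbf{1} + f$ with $\alpha = |S|/|V|$ and $f \perp \mathbf{1}$; a short calculation gives $\|f\|^2 = |S|(1 - |S|/|V|) \le |S|$. Using $A \mathbf{1} = k \mathbf{1}$ and $\langle \mathbf{1}, f \rangle = 0$, the cross terms vanish and one obtains
$$2|E(S)| = \langle \mathbf{1}_S, A \mathbf{1}_S \rangle = \alpha^2 \, k|V| + \langle f, A f \rangle = \frac{k|S|^2}{|V|} + \langle f, A f \rangle.$$

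The remaining (and main) step is the spectral estimate. Expanding $f$ in an orthonormal eigenbasis of the symmetric matrix $A$ and using that $f$ has no component along $\mathbf{1}$, the quadratic form collapses to $\langle f, A f \rangle = \sum_{i \ge 2} \lambda_i \langle f, v_i \rangle^2 \le \lambda_2 \|f\|^2 \le \varepsilon k |S|$. Substituting above and dividing through by $2|E| = k|V|$ yields the claimed inequality. The one subtlety worth flagging is that the paper's definition of $\varepsilon$-spectral expander constrains only $\lambda_2$, not $|\lambda_{|V|}|$, so the bound $\langle f, A f \rangle \le \lambda_2 \|f\|^2$ is genuinely one-sided; this is exactly why the lemma is stated as an upper bound on $|E(S)|/|E|$ rather than the two-sided estimate familiar from the standard literature, and no extra work is needed to accommodate possibly very negative $\lambda_i$.
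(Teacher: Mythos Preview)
The paper states this lemma without proof (it is quoted as a standard fact in the introductory section on spectral expansion), so there is no proof in the paper to compare against. Your argument is the standard and correct one: expressing $2|E(S)|$ as the Rayleigh quotient $\langle \mathbf{1}_S, A\mathbf{1}_S\rangle$, decomposing $\mathbf{1}_S$ along and orthogonal to $\mathbf{1}$, and bounding the orthogonal part by $\lambda_2\|f\|^2 \le \varepsilon k|S|$ before dividing by $2|E|=k|V|$; your remark that only a one-sided bound on $\lambda_2$ is used, matching the one-sided conclusion, is also on point.
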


Note that the expectation of the fraction of edges inside $S$ in a random graph is $(|S|/|V|)^2$, so the spectral expansion measures how close is $G$ to the behavior of a random graph.

\subsubsection{High dimensional expansion}
There are several different ways to extend the notion of expansion from graphs to simplicial complexes. In the following we provide an informal definition of them just for dimension $2$, for formal definitions see~\secref{sec:preliminaries}.

\paragraph{Coboundary expansion.}
The notion of \emph{coboundary expansion} has been introduced by Linial and Meshulam~\cite{LM06} in their work on homological connectivity of random complexes, and independently by Gromov~\cite{Gro10} in his work on the topological overlapping property. Coboundary expansion is a natural extension of graph's combinatorial expansion from a homological point of view.

Let $X=(V,E,T)$ be a $2$-dimensional simplicial complex and assume that any vertex is contained in $k_1$ edges and any edge is contained in $k_2$ triangles. For any subset of vertices $S \subseteq V$, let $\delta(S) \subseteq E$ be the coboundary of $S$, defined as the set of edges which $S$ touches odd many times. Note that $\delta(S) = E(S,\bar{S})$, so this measures exactly the set of outgoing edges of $S$. Recall that for $S \in \{\emptyset,V\}$, $\delta(S)$ is trivially empty. For any subset of edges $F \subseteq E$, let $\delta(F) \subseteq T$ be the coboundary of $F$, defined as the set of triangles which $F$ touches odd many times. For subsets of edges we also have sets which their coboundary is trivially empty: Consider the set of edges $F$ between $S$ and $V\setminus S$ for some subset $S \subseteq V$. This set is called a cut in a graph. Note that for any $F$ which is a cut, $\delta(F) = \emptyset$. We say that $X$ is an $\varepsilon$-coboundary expander if:
\begin{enumerate}
	\item For any $S \subset V$, $S \notin \{\emptyset, V \}$,
	$$\frac{|\delta(S)|}{\dist(S,\{\emptyset, V \})} \ge \varepsilon k_1.$$
	\item For any $F \subset E$, $F \notin \{\mbox{cuts} \}$,
	$$\frac{|\delta(F)|}{\dist(F,\{\mbox{cuts}\})} \ge \varepsilon k_2.$$
\end{enumerate}

Condition 1 in the definition is exactly the Cheeger constant of the underlying graph of $X$, and condition 2 is its high dimensional analog for the edges of $X$.

\paragraph{Cosystolic expansion.}
Coboundary expansion is a very strong requirement and as of now it is not known if bounded degree coboundary expanders of dimension greater than $1$ even exist. A relaxation of coboundary expansion, called \emph{cosystolic expansion}, has been defined by~\cite{EK16}, and the existence of bounded degree cosystolic expanders of any dimension has been proven in~\cite{KKL14,EK16}.

In cosystolic expansion, we allow non-trivial sets to have coboundary $0$ as long as they are large. We call the sets which have coboundary $0$, the \emph{cocycles}. Then $X$ is said to be an $(\varepsilon,\mu)$-cosystolic expander if:
\begin{enumerate}
	\item For any $S \subseteq V$, $S \notin \{\emptyset, V\}$:
	\begin{enumerate}
		\item If $|\delta(S)| = 0$ then $|S| \ge \mu|V|$.
		\item Otherwise, $$\frac{|\delta(S)|}{\dist(S,\{\mbox{cocycles}\})} \ge \varepsilon k_1.$$
	\end{enumerate}
	\item For any $F \subseteq E$, $F \notin \{\mbox{cuts}\}$:
	\begin{enumerate}
		\item If $|\delta(F)| = 0$ then $|F| \ge \mu|E|$.
		\item Otherwise, $$\frac{|\delta(F)|}{\dist(F,\{\mbox{cocycles}\})} \ge \varepsilon k_2.$$
	\end{enumerate}
\end{enumerate}

Condition 1 in the definition is like saying that the underlying graph of $X$ is composed of many large connected components, where each of them is an $\varepsilon$-combinatorial expander, and condition 2 is its high dimensional analog for the edges of $X$.

\subsection{Constructing lattices from high dimensional expanders}
The idea of constructing good distance lattices from high dimensional expanders is the following. We take a complex which is a cosystolic expander, so we know that it has only large non-trivial cocycles (this is condition (a) in the definition above). We consider its cohomology group, which is a quotient space of non-trivial cocycles, where we identify two cocycles as equivalent if they differ by a trivial cocycle. Now we take a minimal representative from each equivalence class as a basis for the lattice and consider their $\Z$-span. But for that we need to know that the complex is a cosystolic expander over $\Z$. Let us explain what that means.

\subsubsection{Cosystolic expansion over $\Z$}
Note that both of the above definitions of coboundary and cosystolic expansion relate to subsets of faces. This is identical to considering functions from the vertices to $\F_2$ and from the edges to $\F_2$. These definitions extend naturally to functions over any ring (with a small modification to the coboundary operator, see~\secref{sec:preliminaries}).

In the work of~\cite{EK16}, they showed the existence of cosystolic expanders for functions over $\F_2$. This is not enough for us as we need cosystolic expansion for functions over $\Z$: If $X$ is a cosystolic expander with respect to functions over $\Z$, then any element in the $\Z$-span of non-trivial cocycles is large (this is part of the definition of cosystolic expansion). Therefore, for our lattice construction we need to prove the existence of cosystolic expanders over $\Z$.

We generalize the proof of~\cite{EK16} so it would work over any ring. First, we have translated their proof to language of probabilities, which makes the proof simpler even though the main ideas remain the same. Second, when working over general rings and not only over $\F_2$, there is the matter of orientations of faces which is needed to be taken care of. Previous works did not worry about orientations as they worked only over $\F_2$, where addition and subtraction are the same. We work over general rings, hence we have to cope with orientations of faces. This was not done in previous works.

The key point for proving cosystolic expansion over any ring is to show that any cochain can be decomposed into local parts, so its global expansion would be implied by the expansion of its local parts. Our main technical contribution is the following theorem.

\begin{theorem}[Existence of good dimension, informal, for formal see~\ref{thm:existence-of-good-dimension}]\label{thm:existence-of-good-dimension-informal}
If the underlying graph of any link in $X$ is a good enough spectral expander, then for any function $f:X(k) \to R$ for any ring $R$, there exists a dimension $0 \le i \le k$, such that $f$ can be decomposed to local parts of $i$-dimensional faces and most of the expansion of $f$ is implied by local expansion in the links of $i$-dimensional faces.
\end{theorem}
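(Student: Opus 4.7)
The plan is to run a Garland-style local-to-global argument, adapted from $\F_2$ to an arbitrary ring $R$. For each dimension $0 \le i \le k$ and each $i$-face $\sigma$, I first define the localization $f_\sigma$ of $f$ on the link $X_\sigma$, which is an $R$-valued cochain of dimension $k-i-1$ on $X_\sigma$; orientations on $X$ are fixed once and for all so that the global coboundary $\delta f$ decomposes, with consistent signs, as a probability-weighted sum of the local coboundaries $\delta f_\sigma$ over $\sigma \in X(i)$. The size of $f$ is measured by the probability that a uniformly random $k$-face is in $\supp f$, and similarly for $\delta f$, so that the global and local norms are on the same scale and can be compared by averaging over $\sigma$.

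Next, for each $i$, I would introduce the local expansion budget
\[
\mathrm{Loc}_i(f) \;=\; \Exp{\sigma \in X(i)} \frac{\|\delta f_\sigma\|}{\dist(f_\sigma,\{\text{cocycles in } X_\sigma\})},
\]
and contrast it with a residual quantity $\mathrm{Rem}_i(f)$ that records the part of $f$ which, inside the link of every $i$-face, already looks like a trivial cocycle but has not yet been accounted for globally. At the top $i=k$ the residual vanishes (every $0$-cochain is trivial relative to its own link), while at the bottom $i=0$ the residual is $f$ itself. The spectral expansion of the underlying graph of every link of dimension $\le k-i-1$ is invoked via the expander mixing lemma (Lemma \ref{lem:expander-mixing-lemma}) to bound the drop $\mathrm{Rem}_{i-1}(f) - \mathrm{Rem}_i(f)$ in terms of $\mathrm{Loc}_{i-1}(f)$, up to a small multiplicative loss depending on the spectral gap.

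The good dimension is then produced by pigeonhole: since $\mathrm{Rem}_0(f) \ge c\,\|f\|$ and $\mathrm{Rem}_k(f)=0$, there must exist some $0 \le i \le k-1$ at which the one-step drop is at least a $\frac{1}{k}$ fraction of the total, and a short calculation converts this into a lower bound of the form $\mathrm{Loc}_i(f) \gtrsim \frac{\|\delta f\|}{\|f\|}$, i.e.\ most of the global expansion of $f$ is explained by the expansion of its local pieces $f_\sigma$ in the links of $i$-dimensional faces.

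The main obstacle will be the two features emphasized in the preceding discussion: working over a general ring rather than $\F_2$, and consistently tracking orientations. Over $\F_2$ the local cochains combine additively with no sign ambiguity and $\|\cdot\|$ is just support size, whereas over $R$ one must verify that the decomposition $\delta f = \Exp{\sigma \in X(i)} \delta f_\sigma$ (with fixed orientations) has no cancellation miracles that would artificially shrink $\|\delta f\|$, and that the expander-mixing step does not rely on the $L^2$ inner product that $\F_2$-arguments implicitly use. The right substitute is a purely combinatorial mixing bound applied to the underlying graph of each link, which only sees the supports of $f_\sigma$ and $\delta f_\sigma$ and is therefore insensitive to the ring structure.
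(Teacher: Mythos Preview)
Your proposal has a genuine structural gap: the asserted decomposition ``$\delta f$ is a probability-weighted sum of the local coboundaries $\delta f_\sigma$'' is not correct, and this is not an orientation or ring issue but a combinatorial one. For an $i$-face $\sigma$ and $\tau\in X_\sigma(k-i)$, the local coboundary $\delta(f_\sigma)(\vec\tau)$ sums $f$ only over the $k$-subfaces of $\sigma\cup\tau$ obtained by removing a vertex of~$\tau$; the terms $f(\vec{\sigma\tau}\setminus\{v\})$ with $v\in\sigma$ are simply absent. Hence $\tau\in\supp(\delta f_\sigma)$ does \emph{not} imply $\sigma\cup\tau\in\supp(\delta f)$, and no amount of averaging over $\sigma$ repairs this at the level of supports (which is what the Hamming norm sees). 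Your residual quantities $\mathrm{Rem}_i$ are too vague to absorb this defect, and the final pigeonhole conclusion $\mathrm{Loc}_i(f)\gtrsim\|\delta f\|/\|f\|$ is in the wrong direction: the theorem needs a \emph{lower} bound on $\|\delta f\|$ from local data, not the reverse.

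The paper's argument (following \cite{EK16}) fixes exactly this obstruction by replacing $f_\sigma$ with $(f\!\downarrow\!\sigma)_\sigma$, the restriction of $f$ to those $k$-faces that descend to $\sigma$ through a chain of \emph{fat} faces (faces that see a large fraction of $\supp f$ in their link). Lemma~\ref{lem:when-link-sees-everything-coboundaries-are-global} then guarantees that a local coboundary of $(f\!\downarrow\!\sigma)_\sigma$ lifts to a global coboundary of $f$, \emph{except} when the offending face $\sigma\cup\tau\setminus\{v\}$, $v\in\sigma$, is itself in $\supp f$; the fat-face machinery shows this exception either pushes the contribution down to a fat $(i-1)$-face (the ``one dimension lower'' term) or lands in a small set $\Upsilon$ of bad $(k+1)$-faces. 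The skeleton expansion of links is used \emph{only} to bound $\|\Upsilon\|$ (two fat $i$-faces meeting in a non-fat $(i-1)$-face form an edge inside a sparse vertex set in some link), not in the Garland/mixing-lemma way you sketch. Finally, the theorem requires $f$ to be locally minimal and small ($\|f\|\le\alpha$), hypotheses your proposal omits; local minimality is what makes $\|(f\!\downarrow\!\sigma)_\sigma\|=\dist((f\!\downarrow\!\sigma)_\sigma,B^{k-|\sigma|})$ so that coboundary expansion of the link gives the ratio $\beta_i$, and smallness is what forces $A_{-1}=\emptyset$ and starts the descending chain of inequalities from which the good dimension $i$ is extracted.
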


The above theorem tells us that the global expansion of a complex can be deduced from the expansion of its links. Thus, in order to get cosystolic expansion over any ring we only need to show that the links are good, i.e., their underlying graph is a spectral expander and they are coboundary expanders over any ring. We show that Ramanujan complexes have this property.

\subsubsection{Ramanujan complexes and their links}
Ramanujan complexes are the high dimensional analogs of the celebrated LPS graphs~\cite{LPS88}. LPS graphs are constructed by taking quotients of the infinite tree, which is the best expander possible. The infinite tree has an high dimensional analog, called the Bruhat-Tits building. This led~\cite{LSV05.1} to study quotients of it as a generalization of LPS graphs. By taking quotients of the Bruhat-Tits building,~\cite{LSV05.2} achieve an explicit construction of bounded degree simplicial complexes which locally look like the infinite object. These complexes are called \emph{Ramanujan complexes}. (For more on Ramanujan complexes see~\cite{Lub14}.)

Every link of a Ramanujan complex is a very symmetric complex called the \emph{spherical building} (more details on the spherical building are presented in \secref{sec:spherical-building}). The spherical building by itself is of unbounded degree, since the number of faces incident to any vertex grows with the number of vertices in the complex. But as links of a Ramanujan complex, the global complex is of a bounded degree. In~\cite{EK16}, the authors showed that the $1$-skeleton of the spherical building is an excellent spectral expander (its expansion quality is controlled by a parameter called the \emph{thickness} of the building), so it is left to show that the spherical building is a coboundary expander over any ring.

In~\cite{LMM16}, the authors showed that the spherical building is a coboundary expander over $\F_2$. This is not enough for us as we need expansion over $\Z$. We generalize the work of~\cite{LMM16} by taking care of orientations of faces (which was not necessary in their work since they proved only for $\F_2$). We show that with some modifications, which take orientations into account, the proof of~\cite{LMM16} can work over any ring. We prove the following theorem.

\begin{theorem}[The spherical building is a coboundary expander]
The spherical building is a coboundary expander over any ring.
\end{theorem}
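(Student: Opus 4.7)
The plan is to generalize the LMM16 approach from $\F_2$-coefficients to an arbitrary coefficient ring $R$ by constructing an explicit, bounded chain contraction on the oriented chain complex of the spherical building, and then dualizing. First I would set up the oriented simplicial chain complex $C_\bullet(\Delta; R)$ of the spherical building $\Delta$ of type $A_{n-1}$, fixing once and for all a total order on its vertex set so that the boundary operator is defined with alternating signs. By Solomon--Tits, $\Delta$ is homotopy equivalent to a wedge of $(n-2)$-spheres, so its reduced (co)homology with $R$-coefficients vanishes in every degree below the top; hence, in the range relevant to coboundary expansion, every cocycle is a coboundary and the content of the theorem is purely quantitative.

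Next I would construct, following LMM16, an explicit chain contraction $h : C_k(\Delta; R) \to C_{k+1}(\Delta; R)$ satisfying $\partial h + h \partial = \mathrm{id}$ on reduced chains. The construction uses a fixed reference chamber $F_0$ of $\Delta$ together with a canonical ``projection'' of each vertex of $\Delta$ onto a vertex of $F_0$ coming from the flag structure; $h(\sigma)$ is then a signed $R$-linear combination of $(k+1)$-faces obtained by telescoping $\sigma$ toward $F_0$ through a sequence of intermediate faces, with signs dictated by the simplicial orientation. Dualizing yields $h^* : C^{k+1}(\Delta; R) \to C^k(\Delta; R)$ with $\delta h^* + h^* \delta = \mathrm{id}$, so for every cochain $f$ one has $f - \delta(h^* f) = h^*(\delta f)$, which gives $\dist(f, B^k) \le \|h^*\| \cdot \|\delta f\|$ and hence coboundary expansion with constant $\varepsilon = 1/\|h^*\|$.

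The quantitative step is to bound $\|h^*\|$ uniformly, which amounts to counting, for each $(k+1)$-face $\tau$, the number of $k$-faces $\sigma$ for which $\tau$ appears in the support of $h(\sigma)$. Using the transitive $GL_n(\F_q)$-action on chambers of $\Delta$, this reduces to a combinatorial count depending only on $n$ and independent of the thickness $q$. Crucially, every coefficient produced by $h$ is $\pm 1$, a unit in every ring, so neither the contraction identity nor the norm bound has any ring-theoretic content beyond the existence of $\pm 1$; this is precisely what enables the passage from $\F_2$ to arbitrary $R$.

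The main obstacle is the bookkeeping of orientations. Over $\F_2$, LMM16 can ignore signs entirely and the cone can be defined at the level of unordered sets; over a general ring, each insertion of a vertex of $F_0$ into the ordered representative of $\sigma$ produces a simplicial sign that must agree exactly with the sign dictated by the boundary map, for the telescoping identity $\partial h + h \partial = \mathrm{id}$ to close. My plan is to first verify the identity on a standard simplex as a local model, then globalize by linearity, and finally check compatibility with the $GL_n(\F_q)$-action on $\Delta$ and with the flag structure of $F_0$. I expect this orientation bookkeeping, rather than the counting or the topology, to be where the bulk of the work lies.
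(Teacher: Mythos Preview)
Your approach is essentially the paper's: construct a cochain contraction satisfying $\delta\iota + \iota\delta = \mathrm{id}$ (this is exactly the paper's Lemma~\ref{lem:coboundary-plus-contraction}), then bound its effect on $\delta f$ using the transitive group action, with the orientation bookkeeping being the new content over $\F_2$.

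Two differences are worth flagging. First, rather than an explicit flag-projection telescoping toward a single fixed chamber $F_0$ (which ties you to type $A$), the paper builds a contraction $\iota_\sigma$ for \emph{every} top face $\sigma$, defining the auxiliary chains $c_{\sigma,\tau}$ abstractly and inductively from the filling property of the apartment intersections $A_{\sigma,\tau}$; this works uniformly for all spherical buildings and sidesteps any concrete description of the cone. Second---and this is where your outline is thinnest---the norm bound does not come from bounding a single $\|h^*\|$. The paper \emph{averages} $\iota_\sigma$ over all chambers $\sigma$, swaps the order of summation, and then uses the group action (Proposition~\ref{pro:spherical-building-symmetrical}) to bound $\sum_{(\sigma,\tau):\,\rho\in A_{\sigma,\tau}}\|\tau\|$ for each $\rho\in\supp(\delta f)$. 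For a single fixed $F_0$, the count you describe---how many $k$-faces $\sigma$ place a given $(k{+}1)$-face $\tau$ in $\supp(h(\sigma))$---is not uniform in $\tau$, since it depends on the position of $\tau$ relative to $F_0$; the averaging over chambers is precisely what makes the bound $q$-independent, so the group action is not just a bookkeeping device for the count but the mechanism of the proof. Finally, the paper never needs the cone coefficients to be $\pm 1$; its estimate uses only that $\supp(c_{\sigma,\tau})\subseteq A_{\sigma,\tau}$ has at most $\theta_d$ faces.
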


Since we got that the links of (thick enough) Ramanujan complexes are spectral and coboundary expanders over any ring, we achieve the following theorem.

\begin{theorem}[Ramanujan complexes are cosystolic expanders over any ring]\label{thm:ramanujan-complexes-are-cosystolic-expanders}
For a thick enough $d$-dimensional Ramanujan complex, its $(d-1)$-skeleton is a cosystolic expander over any ring.
\end{theorem}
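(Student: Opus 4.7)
The plan is to assemble the three main ingredients developed in the paper so far. By the Existence of Good Dimension theorem (Theorem~\ref{thm:existence-of-good-dimension-informal}), cosystolic expansion of $X$ over any ring $R$ will follow once two local conditions are verified at every link $X_\sigma$: the $1$-skeleton of $X_\sigma$ is a sufficiently good spectral expander, and $X_\sigma$ is itself a coboundary expander over $R$. Given these, the local-to-global decomposition produces, for any cochain $f:X(k)\to R$ with $\delta f\neq 0$, a dimension $0\le i\le k$ at which $f$ splits into pieces supported on links of $i$-faces, and the coboundary expansion of those links lower-bounds $|\delta f|$ in terms of $\dist(f,\{\mbox{cocycles}\})$, yielding the main cosystolic inequality. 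The same decomposition applied to a purported small cocycle forces its local pieces to be local coboundaries, which can then be patched into a global coboundary trivializing $f$; this supplies the $\mu|X(k)|$ lower bound on the size of any nontrivial cocycle, completing the $(\varepsilon,\mu)$-cosystolic definition.

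To verify the two local conditions, I would use the fact that every link of a Ramanujan complex is a spherical building whose thickness is inherited from (and controlled by) the thickness of the ambient complex. The spectral condition on the $1$-skeleton of each link is then the bound of~\cite{EK16}, which shows that spherical buildings are excellent spectral expanders; choosing the Ramanujan complex thick enough drives $\lambda_2/k$ in every link below whatever threshold Theorem~\ref{thm:existence-of-good-dimension-informal} demands, uniformly in $\sigma$. The coboundary condition over $R$ is exactly the preceding theorem of this paper, asserting coboundary expansion of the spherical building over any ring. Since both local statements are established over arbitrary $R$, and the Existence of Good Dimension theorem is itself proved over arbitrary $R$, cosystolic expansion of the $(d-1)$-skeleton of $X$ follows with coefficients in $R$, with no coefficient-specific modification.

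The main obstacle, and the reason this proof is not a trivial concatenation of three cited facts, is the bookkeeping of orientations in the local-to-global reduction. When $f:X(k)\to R$ is decomposed into local cochains on $X_\sigma$ for $\sigma\in X(i)$, the value of $f$ on an oriented $k$-face $\tau\supset\sigma$ must be transported to an oriented face of $X_\sigma$ with a sign determined by how the chosen orientation of $\sigma$ extends to $\tau$. Over $\F_2$ these signs are invisible, but over $\Z$ they are essential: both the coboundary operator on $X_\sigma$ and the assembly of local coboundaries into a global one require consistent sign conventions, and Theorem~\ref{thm:existence-of-good-dimension-informal} must be stated in a form that respects them. Once orientations are handled systematically, the composition of the three theorems yields both the small-cocycle bound and the expansion inequality required by the definition of an $(\varepsilon,\mu)$-cosystolic expander over $R$, proving the theorem.
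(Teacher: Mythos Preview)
Your proposal is correct and follows the paper's approach: the theorem is presented there as an immediate consequence of the preceding results (links of Ramanujan complexes are spherical buildings, which are skeleton expanders by~\cite{EK16} and coboundary expanders over any ring by the previous theorem), fed into the local-to-global machinery. One refinement worth noting: in the paper the passage from link expansion to cosystolic expansion routes through \emph{small-set expansion} (Theorem~\ref{thm:good-links-imply-small-set-expansion}) and then uses the bounded-degree argument of Proposition~\ref{pro:small-set-expansion-implies-cocyle-expansion-for-one-dimension-less} for the $\varepsilon$-inequality and the direct argument of Proposition~\ref{pro:small-set-expansion-implies-large-non-trivial-cocycles} for the $\mu$-bound---no patching of local coboundaries into a global one is needed, since a small locally minimal cocycle is forced to have norm zero outright.
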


\subsubsection{The dimension of the lattice}

Up to now we got that Ramanujan complexes are cosystolic expanders over $\Z$, and thus can be used in order to construct lattices with large distance. It is left to consider the dimension of these lattices. It is clear that the dimension of the lattice is controlled by the amount of elements in the cohomology groups over $\Z$. However, usually it is easier to understand the cohomology groups over $\F_2$, where understanding them over $\Z$ could be a very hard task. Moreover, as proven in~\cite{KKL14}, we already know that Ramanujan complexes have non-trivial cohomology groups over $\F_2$ (actually we know that the number of elements in $H^1(X;\F_2)$ is logarithmic in the size of the complex~\cite{Lub}). Luckily, there is a way to relate cohomology groups over $\Z$ to cohomology groups with other coefficients. In a way, the cohomology groups over $\Z$ are considered universal, so they determine the cohomology groups with any other coefficients. This is done by \emph{the universal coefficient theorem}, which we introduce next.

\paragraph{The universal coefficient theorem.}
Since understanding this theorem requires a lot of background in algebraic topology, we introduce it in some sort of informal way. Any finitely generated abelian group $H$ has a decomposition to its free part and torsion part. The torsion part contains all elements of finite order (i.e., all $h \in H$ for which there exists $n \in \mathbb{N}$ such that $nh = 0$). Thus $H$ can be decomposed to
$$H \cong \Z^k \oplus T(H),$$
where $k$ is the number of free generators in $H$, and $T(H)$ is its torsion subgroup. 
The universal coefficient theorem gives us information about this decomposition for the cohomology groups. In particular, it tells us that there exist $k,l \ge 0$ and $n_1, n_2, \dotsc, n_l \ge 0$, such that:
\begin{enumerate}
	\item $H^1(X;\F_2) \cong \Z^k \oplus \bigoplus_{i=1}^l \F_{2^{n_i}}$.
	\item $H^1(X;\Z) \cong \Z^k$.
	\item $H^2(X;\Z)$ contains $\bigoplus_{i=1}^l \F_{2^{n_i}}$ in its decomposition.
\end{enumerate}

Therefore we have the following corollary.

\begin{corollary}
If $H^1(X;\F_2)$ is large then either $H^1(X;\Z)$ is large or $H^2(X;\Z)$ is large.
\end{corollary}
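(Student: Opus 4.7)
The plan is to read off the corollary directly from the three decomposition statements listed just before it, which are the output of the universal coefficient theorem. So I would not try to reprove anything from algebraic topology; instead I would simply chase sizes through the three isomorphisms.

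First I would argue by contrapositive: suppose that both $H^1(X;\Z)$ and $H^2(X;\Z)$ are small, and then show that $H^1(X;\F_2)$ must also be small. By statement~(2), $H^1(X;\Z) \cong \Z^k$, so $H^1(X;\Z)$ being small means that the rank $k$ is small. By statement~(3), $H^2(X;\Z)$ contains the torsion summand $\bigoplus_{i=1}^l \F_{2^{n_i}}$, so $H^2(X;\Z)$ being small forces this summand to be small as well, i.e.\ both $l$ and the exponents $n_i$ must be small.

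Then I would invoke statement~(1), $H^1(X;\F_2) \cong \Z^k \oplus \bigoplus_{i=1}^l \F_{2^{n_i}}$ (read as an $\F_2$-vector space of dimension $k + \sum_{i=1}^l n_i$), and conclude that its size is controlled by the very parameters $k$, $l$, and the $n_i$ that we just bounded. Hence $H^1(X;\F_2)$ is small, which is the contrapositive of the corollary.

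The whole argument is essentially a bookkeeping step, so I do not expect any genuine obstacle. The only subtle point is deciding what ``large'' means for each group (cardinality for the $\F_2$-coefficient group, $\Z$-rank for the free part, and order of the torsion subgroup for $H^2(X;\Z)$); I would make this quantitative by observing that $|H^1(X;\F_2)| = 2^{k + \sum_i n_i}$, while the free rank of $H^1(X;\Z)$ is $k$ and the torsion part of $H^2(X;\Z)$ has order at least $2^{\sum_i n_i}$, so at least one of these quantities grows whenever $|H^1(X;\F_2)|$ does.
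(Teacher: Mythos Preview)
Your proposal is correct and matches the paper's own treatment: the paper does not give a written proof at all but simply states the corollary as an immediate consequence of the three decomposition statements~(1)--(3) from the universal coefficient theorem, and your argument is exactly the natural unpacking of that implication. There is nothing to add.
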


A good situation for us would be that the dominant part of $H^1(X;\F_2)$ comes from the free part of the decomposition and not from the torsion part. This would imply that the lattice we construct has a large dimension, as it has many free generators. However, it is still an open question whether this is the case in Ramanujan complexes, and this theorem gives us a lead how to approach the study of the cohomology groups over $\Z$.

\subsection{The lattice construction}

Overall we have that the links of Ramanujan complexes are good spectral and coboundary expanders over $\Z$, and by theorem~\ref{thm:existence-of-good-dimension-informal}, this implies that Ramanujan complexes are cosystolic expanders over $\Z$. So taking the $\Z$-span of minimal representatives of their cohomology groups yields lattices with good distance.

The advantage of this construction is that now we can have both distance and dimension large. Recall that in graphs there is a trivial tradeoff, as the multiplication of the distance and the dimension of the lattice is bounded by $n$, where $n$ is the number of vertices in the graph. When using this framework for high dimensional expanders, it could be the case that both distance and dimension are large. In particular, we show that the distance of lattices constructed from Ramanujan complexes is linear in $n$, and by the universal coefficient theorem, it might be that their dimension is logarithmic in $n$. This yields that their multiplication could potentially be of order $n \log n$. In general we expect that there should be bounded degree co-systolic expanders over some rings that would be resulted in good distance lattices which are sufficiently dense.

\subsection{Discussion and future work}
Recent works showed that cosystolic expansion over $\mathbb{F}_2$ implies Gromov's topological overlapping property~\cite{EK16}. We show here that cosystolic expansion over $\Z$ implies the construction of lattices with good distance. We believe that cosystolic expansion over general rings should have far reaching applications that are beyond one's imagination. For instance, it could lead to new constructions of locally testable codes.

\subsection{Organization}
We start with a preliminaries section that contains the basics of cochains with norms in high dimensional complexes. In section~\ref{sec:cosystolic-expansion} we prove theorem~\ref{thm:existence-of-good-dimension-informal} and show how it implies cosystolic expansion over any ring. In section~\ref{sec:spherical-building} we introduce the links of Ramanujan complexes, which are called spherical buildings, and prove that they are coboundary expanders over any ring.

\section{Preliminaries}\label{sec:preliminaries}
Let $X$ be a $d$-dimensional simplicial complex. For any $-1 \le k \le d$, denote by $X(k)$ the set of $k$-dimensional faces of $X$ (where $X(-1) = \{\emptyset\}$ contains the only $-1$-dimensional face, which is the face with $0$ vertices). An ordered set $\vec{\sigma} = (v_0,v_1, \dotsc,v_k)$ is an \emph{ordered face} of $X$ if the unordered set $\sigma = \{v_0,v_1,\dotsc,v_k \}$ is a face of $X$. Denote by $\vec{X}(k)$ the set of ordered $k$-dimensional faces of $X$. The space of \emph{$k$-cochains} over a ring $R$ is defined as
$$C^k = C^k(X;R) = \{f : \vec{X}(k) \to R \;|\; f \mbox{ is antisymmetric} \},$$
where $f$ is antisymmetric if for any permutation $\pi \in Sym(k+1)$,
$$f((v_{\pi(0)},v_{\pi(1)},\dotsc,v_{\pi(k)})) = sgn(\pi)f((v_0,v_1,\dotsc,v_k)).$$

Note that for $R = \mathbb{F}_2$, the $k$-cochains are just subsets of $X(k)$ (where we identify a subset of faces with its characteristic function). In the works of~\cite{LMM16} and~\cite{EK16}, which we generalize in this paper, the authors worked only with cochains over $\mathbb{F}_2$ so they did not have to worry about ordered faces and change of signs. We let the cochains to be over any ring so we need to take these considerations into account.

We measure the size of a cochain according to its hamming weight with proportion to the top dimension of the complex, as follows. Let $r_d,r_{d-1},\dotsc,r_{-1}$ be a sequence of random faces of $X$, where $r_d$ is distributed uniformly on $X(d)$, and for any $k<d$, $r_k$ is obtained by removing a uniformly random vertex from $r_{k+1}$. All the probabilities we measure in this work would be over this distribution of random faces. For any $k$-cochain $f \in C^k$, we denote its support by $A = \supp(f) = \{\sigma \in X(k) \;|\; f(\sigma) \ne 0\}$, and define its norm to be $\norm{f} = \norm{A} = \Pr[r_k \in A]$. (Note that the support of $f$ is a set of unordered faces, and it is well defined even though the cochain is defined on ordered faces, since it does not matter which ordering we take.)

For any $\vec{\sigma} = (v_0, v_1, \dotsc, v_k)$ we denote by $\vec{\sigma}\setminus \{v_i\} = (v_0,\dotsc, v_{i-1},v_{i+1},\dotsc,v_k)$ the ordered $(k-1)$-face obtained by removing $v_i$ from $\sigma$. The $k$-coboundary operator $\delta = \delta^k: C^k \to C^{k+1}$ is defined as
$$\delta(f)(\vec{\sigma}) = \sum_{i=0}^{k+1}(-1)^i f(\vec{\sigma}\setminus \{v_i\}).$$

Denote by $B^k = \mbox{Im}(\delta^{k-1}) = \{\delta^{k-1}(f) \;|\; f \in C^{k-1}\}$ the $k$-coboundaries of $X$, and by $Z^k = \ker(\delta^k) = \{f \in C^k \;|\; \delta^k(f) = 0\}$ the $k$-cocycles of $X$. The $k$-cohomology group is the quotient space $H^k = Z^k/B^k$. The distance of a $k$-cochain $f \in C^k$ from the $k$-coboundaries is defined as $\dist(f,B^k) = \min\{\norm{f - b} \;|\; b \in B^k \}$. Similarly, the distance from the $k$-cocycles is defined as $\dist(f,Z^k) = \min\{\norm{f - z} \;|\; z \in Z^k \}$.

We can now present the notion of coboundary expansion as was introduced by Linial-Meshulam~\cite{LM06} and by Gromov~\cite{Gro10}.

\begin{definitoin}[Coboundary expansion]\label{def:coboundary-expansion}
	Let $X$ be a $d$-dimensional simplicial complex and $R$ a ring. $X$ is called an \emph{$\varepsilon$-coboundary expander} over $R$, if for any $k$-cochain which is not a $k$-coboundary $f \in C^k(X;R) \setminus B^k(X;R)$, $0 \le k \le d-1$,
	$$\frac{\norm{\delta(f)}}{\dist(f, B^k(X;R))} \ge \varepsilon.$$
\end{definitoin}

As it turns out, coboundary expansion is a very strong requirement. Currently it is not known whether bounded degree coboundary expanders of dimension $\ge 2$ even exist. This leads to the relaxation of coboundary expansion, called cosystolic expansion, which was introduced by~\cite{EK16}, and is defined as follows.

\begin{definitoin}[Cosystolic expansion]\label{def:cosystolic-expansion}
	Let $X$ be a $d$-dimensional simplicial complex and $R$ a ring. $X$ is called an \emph{$(\varepsilon, \mu)$-cosystolic expander} over $R$, if:
	\begin{enumerate}
		\item For any $f \in C^k(X;R) \setminus Z^k(X;R)$, $0 \le k \le d-1$,
		$$\frac{\norm{\delta(f)}}{\dist(f, Z^k(X;R))} \ge \varepsilon.$$
		\item For any $z \in Z^k(X;R) \setminus B^k(X;R)$, $0 \le k \le d-1$,
		$$\norm{z} \ge \mu.$$
	\end{enumerate}
\end{definitoin}

Recall that for any $\sigma \in X$, its \emph{link} is the subcomplex obtained by taking all the faces in $X$ which contain $\sigma$, and removing $\sigma$ from all of them. Since the link of $\sigma$ is a complex by itself, we can talk about cochains and norms in the link. Consider a $(k-|\sigma|)$-cochain in the link of $\sigma$, $f\in C^{k-|\sigma|}(X_\sigma;R)$. Its norm in the link is the probability that a random face would fall in $\supp(f)$ when the top face is distributed uniformly over the top faces in $X_\sigma$. Thus,
$$\norm{f} = \Pr[r_k \setminus \sigma \in \supp(f) \;|\; r_{|\sigma|-1} = \sigma],$$
where $\norm{f}$ is the norm in the link, and $r_k,r_{|\sigma|-1}$ are the random faces chosen in $X$.

From now on we fix for any face in the complex an arbitrary choice of ordering, so for any $\sigma \in X$ there is one fixed ordered face $\vec{\sigma}$ which corresponds to it. The choice of ordering does not matter, it just has to be consistent. For any $k$-cochain $f \in C^k$ and any face $\sigma \in X$, we define the \emph{localization} of $f$ to the link of $\sigma$, denoted by $f_\sigma$, as follows. For any ordered $(k-|\sigma|)$-face $\vec{\tau} \in \vec{X}_\sigma(k-|\sigma|)$, we define $f_\sigma(\vec{\tau})=f(\vec{\sigma\tau})$, where $\vec{\sigma\tau} \in \vec{X}(k)$ is the ordered $k$-face obtained by concatenating $\vec{\tau}$ to $\vec{\sigma}$. We say that a cochain $f \in C^k$ is \emph{minimal} if $\norm{f} = \dist(f, B^k)$. We say that $f$ is \emph{locally minimal} if its localization to any link is minimal, i.e., if $f_\sigma$ is minimal in $X_\sigma$ for any $\emptyset \ne \sigma \in X$.

The following two lemmas regarding minimal and locally minimal cochains will be necessary later.

\begin{lemma}[Minimal cochains are closed under inclusion]\label{lem:minimal-cochains-are-closed-under-inclusion}
Let $X$ be a $d$-dimensional simplicial complex and $R$ a ring. For any $f,g\in C^k(X;R)$, $0 \le k \le d$, if $f$ is a minimal cochain and $g(\vec{\sigma}) = f(\vec{\sigma})$ for any $\sigma \in \supp(g)$, then $g$ is a minimal cochain.
\end{lemma}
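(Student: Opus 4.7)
I would argue by contrapositive: assume $g$ is not minimal and produce a coboundary that beats $f$. So suppose there is $b \in B^k$ with $\|g - b\| < \|g\|$; the plan is to show the same $b$ witnesses $\|f - b\| < \|f\|$, contradicting the minimality of $f$.

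The starting observation is that $\supp(g) \subseteq \supp(f)$: if $\sigma \in \supp(g)$ then by hypothesis $f(\vec\sigma) = g(\vec\sigma) \neq 0$. Write $A = \supp(f)$, $A' = \supp(g)$, and $C = \supp(b)$, so $A' \subseteq A$. I would then compute the supports of $g - b$ and $f - b$ by splitting into the three regions determined by whether a face lies in the relevant support and in $C$. Off $C$, $f-b$ agrees with $f$ and $g-b$ with $g$, so these contribute $|A \setminus C|$ and $|A' \setminus C|$ respectively. On $C \setminus A$ (resp.\ $C \setminus A'$), $f-b$ (resp.\ $g-b$) equals $-b$ and hence is nonzero. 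The only interesting region is $A \cap C$ (resp.\ $A' \cap C$), where cancellation can occur; call the cancellation sets $D' \subseteq A \cap C$ and $D \subseteq A' \cap C$. This yields the identities
\[
\|f - b\| = |A \cup C| - |D'|, \qquad \|g - b\| = |A' \cup C| - |D|.
\]

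The crucial monotonicity is $D \subseteq D'$: if $\sigma \in D \subseteq A' \cap C$ then $g(\vec\sigma) = b(\vec\sigma)$, and since $\sigma \in \supp(g)$ the hypothesis gives $f(\vec\sigma) = g(\vec\sigma) = b(\vec\sigma)$, so $\sigma \in D'$. Combined with $A' \cap C \subseteq A \cap C$, this gives $|A' \cap C| + |D| \leq |A \cap C| + |D'|$. Now the assumption $\|g - b\| < \|g\|$ rewrites (after using $|A' \cup C| = |A'| + |C| - |A' \cap C|$) as $|C| < |A' \cap C| + |D|$, which by the inequality above forces $|C| < |A \cap C| + |D'|$, i.e.\ $\|f - b\| < \|f\|$. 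This contradicts the minimality of $f$ and completes the argument. (The same bookkeeping works verbatim with the probability measure in place of cardinalities, since all the operations above are unions, intersections, and differences of supports.)

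The proof is essentially a set-theoretic accounting, and I do not expect any real obstacle: the only subtle point is the implication $D \subseteq D'$, which is exactly where the hypothesis ``$g$ and $f$ agree on $\supp(g)$'' gets used, and where orientation/sign conventions must be handled correctly since we are working over a general ring $R$ rather than $\F_2$.
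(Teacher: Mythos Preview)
Your argument is correct and is essentially the same idea as the paper's, just unpacked more explicitly via support bookkeeping rather than the triangle inequality. The paper compresses your computation into the single line $\|f-h\|-\|g-h\|\le\|f-g\|=\|f\|-\|g\|$ for any cochain $h$ (the equality holding because $\supp(f-g)=\supp(f)\setminus\supp(g)$), and then concludes directly that $\|g\|\le\|g\|+(\|f-b\|-\|f\|)\le\|g-b\|$ for every coboundary $b$; your inclusion $D\subseteq D'$ is exactly the concrete content behind that equality, and your chain $|C|<|A'\cap C|+|D|\le|A\cap C|+|D'|$ is the contrapositive of the same inequality.
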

\begin{proof}
Note that for any $k$-cochain $h\in C^k$,
\begin{equation}\label{eq:subset-of-minimal-cochain-is-minimal-1}
\norm{f-h} - \norm{g-h} \le \norm{f-g} = \norm{f} - \norm{g},
\end{equation}
where the equality follows by the fact that $g(\vec{\sigma}) = f(\vec{\sigma})$ for any $\sigma \in \supp(g)$. Then for any $k$-coboundary $b\in B^k$,
$$\norm{g} = \norm{g} + \norm{f} - \norm{f} \le
\norm{g} + \norm{f-b} - \norm{f} \le \norm{g-b},$$
where the first inequality follows by the fact that $f$ is a minimal cochain, and the second inequality follows by~\eqref{eq:subset-of-minimal-cochain-is-minimal-1}.
\end{proof}

\begin{lemma}[Minimal cochain is also locally minimal]\label{lem:minimal-cochain-is-also-locally-minimal}
Let $X$ be a $d$-dimensional simplicial complex and $R$ a ring. For any $f \in C^k(X;R)$, $0 \le k \le d$, if $f$ is minimal, then $f$ is also locally minimal.
\end{lemma}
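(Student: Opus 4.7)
The plan is to prove the contrapositive: if $f$ is not locally minimal, then $f$ is not minimal. So suppose there exists $\emptyset \ne \sigma \in X$ and a cochain $c' \in C^{k-|\sigma|-1}(X_\sigma;R)$ such that, in the link,
$$\norm{f_\sigma - \delta_{X_\sigma}(c')} < \norm{f_\sigma}.$$
The strategy is to lift $c'$ to a global cochain $c\in C^{k-1}(X;R)$ whose coboundary is supported in $\{\tau \in X(k) : \sigma \subseteq \tau\}$ and localizes to $\delta_{X_\sigma}(c')$ at $\sigma$, and then to show that the improvement achieved in $X_\sigma$ translates directly into an improvement of $\|f\|$ globally.

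First I would define the extension. For each $\rho \in X(k-1)$ containing $\sigma$, write $\rho = \sigma \cup \mu$ with $\mu \in X_\sigma(k-|\sigma|-1)$, consider the concatenated ordered face $\vec{\sigma\mu}$, and set $c(\vec{\sigma\mu}) = (-1)^{|\sigma|} c'(\vec{\mu})$; extend to other orderings of $\rho$ by antisymmetry, and declare $c(\vec{\rho}) = 0$ whenever $\sigma \not\subseteq \rho$. This produces a well-defined element of $C^{k-1}(X;R)$. Next I would compute $\delta c$. If $\sigma \not\subseteq \tau$, then every subface $\tau \setminus \{v_i\}$ also fails to contain $\sigma$ (since $\sigma \subseteq \tau \setminus \{v_i\}$ forces $\sigma \subseteq \tau$), so $\delta c(\vec{\tau}) = 0$. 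If $\tau = \sigma \cup \mu$ with $\mu \in X_\sigma(k-|\sigma|)$, then the terms in $\delta c(\vec{\sigma\mu})$ coming from removing a vertex of $\sigma$ vanish for the same reason, while the term removing the vertex at position $|\sigma|+j$ (a vertex of $\mu$) contributes $(-1)^{|\sigma|+j}\cdot(-1)^{|\sigma|} c'(\vec{\mu}\setminus\{v_j\})=(-1)^j c'(\vec{\mu}\setminus\{v_j\})$. Summing, $(\delta c)_\sigma = \delta_{X_\sigma}(c')$, and this is the point where the factor $(-1)^{|\sigma|}$ in the definition of $c$ is chosen precisely to cancel the sign arising from the position of the removed vertex.

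Finally, I would compare norms using the distribution on random faces. The conditional distribution of $r_k\setminus \sigma$ given $\sigma\subseteq r_k$ is exactly the distribution of $r_{k-|\sigma|}$ in $X_\sigma$, because both weight a face by the number of top-dimensional faces of $X$ that contain it. Setting $p = \Pr[\sigma \subseteq r_k]$ and using that $\delta c$ is supported on faces containing $\sigma$,
\begin{align*}
\norm{f} &= \Pr[\sigma \not\subseteq r_k,\, f(r_k)\ne 0] + p\cdot \norm{f_\sigma}_{X_\sigma}, \\
\norm{f - \delta c} &= \Pr[\sigma \not\subseteq r_k,\, f(r_k)\ne 0] + p\cdot \norm{f_\sigma - \delta_{X_\sigma}(c')}_{X_\sigma}.
\end{align*}
Subtracting yields $\norm{f} - \norm{f-\delta c} = p\bigl(\norm{f_\sigma} - \norm{f_\sigma - \delta_{X_\sigma}(c')}\bigr) > 0$, so $\norm{f - \delta c} < \norm{f}$, contradicting minimality of $f$.

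The main obstacle is the sign bookkeeping: since we work over an arbitrary ring $R$ rather than $\F_2$, orientations matter and the naive "carry over the same value" extension would give $(\delta c)_\sigma = (-1)^{|\sigma|}\delta_{X_\sigma}(c')$, which for odd $|\sigma|$ would translate a decrease of $\|f_\sigma - \delta c'\|$ below $\|f_\sigma\|$ into a statement about $\|f_\sigma + \delta c'\|$ that is not controlled. Choosing the extension with the compensating sign $(-1)^{|\sigma|}$ is what makes the argument go through uniformly in $|\sigma|$; everything else reduces to routine verification of antisymmetry and the identification of the conditional random-face distribution with the one in the link.
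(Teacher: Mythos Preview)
Your proof is correct and follows essentially the same contrapositive approach as the paper: lift a link cochain $c'$ witnessing non-minimality of $f_\sigma$ to a global $c\in C^{k-1}(X;R)$ supported on faces containing $\sigma$, and observe that the resulting global coboundary decreases $\|f\|$. You are in fact more careful than the paper about the sign $(-1)^{|\sigma|}$ needed so that $(\delta c)_\sigma = \delta_{X_\sigma}(c')$ (the paper's terse ``$g_\sigma = h$, then $\|f-\delta(g)\|<\|f\|$'' glosses over this, though it is harmless since one can always replace $h$ by $-h$), and you spell out explicitly the identification of the conditional distribution of $r_k$ given $\sigma\subseteq r_k$ with the link distribution.
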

\begin{proof}
Let $f \in C^k(X;R)$ be a minimal cochain. Assume towards contradiction that $f$ is not locally minimal. There exists a face $\emptyset \ne \sigma \in X$ and a cochain $h \in C^{k-|\sigma|-1}(X_\sigma;R)$ such that
\begin{equation}\label{eq:minimal-cochain-is-also-locally-minimal-1}
\norm{f_\sigma - \delta(h)} < \norm{f_\sigma}.
\end{equation}

Define $g \in C^{k-1}(X;R)$ by $g(\sigma\tau) = h(\tau)$ for any $\tau \in X_\sigma$, and for any other face $g(\tau) = 0$. Note that $g_\sigma = h$, then by~\eqref{eq:minimal-cochain-is-also-locally-minimal-1},
$$\norm{f - \delta(g)} < \norm{f},$$
in contradiction to the minimality of $f$. It follows that $f$ is locally minimal.
\end{proof}

We have one more definition we want to present in this section.
\begin{definitoin}[Skeleton expansion]\label{def:skeleton-expansion}
	Let $X$ be a $d$-dimensional simplicial complex. $X$ is called an \emph{$\alpha$-skeleton expander}, if for any subset of vertices $S \subseteq X(0)$,
	$$\norm{E(S)} \le \norm{S}^2 + \alpha\norm{S},$$
	where $E(S)$ denotes the set of edges with both endpoints in $S$.
\end{definitoin}

\section{Cosystolic expansion}\label{sec:cosystolic-expansion}
Our aim in this section is to show that good links imply cosystolic expansion over any ring. In~\cite{KKL14}, the authors showed that for bounded degree complexes, cosystolic expansion of the $(d-1)$-skeleton is implied by the expansion of small cochains which are locally minimal. Let us define this sort of small-set expansion.

\begin{definitoin}[Small-set expander]
Let $X$ be a $d$-dimensional simplicial complex and $R$ a ring. $X$ is called an \emph{$(\varepsilon,\mu)$-small-set expander} over $R$, if for any $f \in C^k(X;R)$, $0 \le k \le d-1$,
$$f \mbox{ is locally minimal and } \norm{f} \le \mu \quad\Rightarrow\quad \norm{\delta(f)} \ge \varepsilon\norm{f}.$$
\end{definitoin}

We start by showing that small-set expansion implies cosystolic expansion, and then we will show that good links imply small-set expansion.

\subsection{Small-set expansion implies cosystolic expansion}
The criterion of having only large non-trivial cocycles is immediate from the small-set expansion, and this actually holds for unbounded degree complexes as well.

\begin{proposition}[Small-set expansion implies large non-trivial cocycles]\label{pro:small-set-expansion-implies-large-non-trivial-cocycles}
Let $X$ be a $d$-dimensi-onal $(\varepsilon,\mu)$-small-set expander over a ring $R$. For any $z \in Z^k(X;R) \setminus B^k(X;R)$, $0 \le k \le d-1$, it holds that $\norm{z} \ge \mu$.
\end{proposition}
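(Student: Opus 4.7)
The plan is to argue by contradiction. Suppose there exists a nontrivial cocycle $z \in Z^k(X;R) \setminus B^k(X;R)$ with $\norm{z} < \mu$. The obstacle to applying the small-set expansion hypothesis directly is that $z$ itself may fail to be locally minimal. I will therefore first pass to a better representative of the same cohomology class.

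Consider the coset $z + B^k$. Since the norm takes only finitely many values (it is determined by the support, and $X(k)$ is finite), there exists $z' \in z + B^k$ achieving $\norm{z'} = \dist(z, B^k)$. This $z'$ has three useful properties: (i) $z' \in Z^k$, because coboundaries lie in the kernel of $\delta$, so $\delta(z') = \delta(z) = 0$; (ii) $z' \notin B^k$, because being a coboundary is a property of the cohomology class and $z \notin B^k$; and (iii) $\norm{z'} \le \norm{z} < \mu$. Moreover, $z'$ is minimal in the sense of the paper, because $\dist(z', B^k) = \dist(z, B^k) = \norm{z'}$. By Lemma~\ref{lem:minimal-cochain-is-also-locally-minimal}, $z'$ is then locally minimal.

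Now I can invoke the small-set expansion hypothesis on $z'$: since $z'$ is locally minimal and $\norm{z'} < \mu$, we obtain
$$\norm{\delta(z')} \ge \varepsilon \norm{z'}.$$
But $z' \in Z^k$, so $\delta(z') = 0$ and the left-hand side vanishes. Hence $\norm{z'} = 0$, which forces $\supp(z') = \emptyset$ and therefore $z' = 0$. This contradicts $z' \notin B^k$, since $0 = \delta^{k-1}(0) \in B^k$.

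The argument is essentially routine once the correct reduction is made; there is no serious obstacle. The one subtlety worth emphasizing is that the small-set expansion hypothesis is formulated only for locally minimal cochains, so the whole content of the proof lies in the observation that any cohomology class contains a globally minimal (hence locally minimal) representative of norm at most that of any other representative. Note that this proof does not use the bounded degree of $X$ anywhere, matching the comment in the paper that the conclusion holds in the unbounded degree setting as well.
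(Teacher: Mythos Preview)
Your proof is correct and is essentially identical to the paper's: both reduce to a minimal representative of the cohomology class, invoke Lemma~\ref{lem:minimal-cochain-is-also-locally-minimal} to obtain local minimality, apply the small-set expansion hypothesis, and conclude from $\delta(z')=0$ that $\norm{z'}=0$. The only cosmetic difference is that the paper phrases the reduction as ``it is enough to prove the claim for minimal cocycles'' rather than as a proof by contradiction.
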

\begin{proof}
Note that it is enough to show that it holds for minimal non-trivial cocycles, since if $z \in Z^k(X;R) \setminus B^k(X;R)$ is not minimal, then there exists a coboundary $b \in B^k(X;R)$ such that $\norm{z} \ge \norm{z - b}$ and $z - b \in Z^k(X;R) \setminus B^k(X;R)$ is minimal.

Let $z \in Z^k(X;R)$ be a minimal cocycle. We show that if $\norm{z} < \mu$, then $z \in B^k(X;R)$. Since $z$ is minimal, by lemma~\ref{lem:minimal-cochain-is-also-locally-minimal} $z$ is also locally minimal, so by the small-set expansion, $\norm{\delta(z)} \ge \varepsilon\norm{z}$. But on the other hand $z \in Z^k(X;R)$ so $\norm{\delta(z)} = 0$. It follows that $\norm{z} = 0$, so $z \in B^k(X;R)$ as required.
\end{proof}

We say that a complex $X$ is \emph{$Q$-bounded degree} if for any $v \in X(0)$, $|X_v| \le Q$. In order to show that small-set expansion implies that any cochain in a bounded degree complex expands with respect to the cocycles (first criterion in the definition of cosystolic expansion), we need the following lemma.

\begin{lemma}\label{lem:fix-cochain-to-locally-minimal-is-small}
Let $X$ be a $d$-dimensional $Q$-bounded degree simplicial complex and $R$ a ring. For any $f \in C^k(X;R)$, $0 \le k \le d-1$, there exists $g \in C^{k-1}(X;R)$ such that:
\begin{enumerate}
	\item $\norm{g} \le Q^2\norm{f}$.
	\item $f - \delta(g)$ is locally minimal.
	\item $\norm{f - \delta(g)} \le \norm{f}$.
\end{enumerate}
\end{lemma}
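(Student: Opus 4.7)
The plan is to construct $g$ by an iterative correction procedure that repairs local minimality one link at a time. Initialize $g_0 := 0$ and $f_i := f - \delta(g_i)$. If $f_i$ is locally minimal, stop and set $g := g_i$. Otherwise, by the definition of locally minimal, pick a nonempty face $\sigma_i \in X$ and a witness $h_i \in C^{k - |\sigma_i| - 1}(X_{\sigma_i}; R)$ with $\norm{(f_i)_{\sigma_i} - \delta(h_i)}_{X_{\sigma_i}} < \norm{(f_i)_{\sigma_i}}_{X_{\sigma_i}}$; I would take $h_i$ minimal in its own right, so that its norm is not wastefully large. Extend $h_i$ to $\tilde h_i \in C^{k-1}(X; R)$ supported only on faces containing $\sigma_i$ via $\tilde h_i(\vec{\sigma_i \tau}) := h_i(\vec\tau)$ (and the antisymmetric extension on other orderings), and update $g_{i+1} := g_i + \tilde h_i$.

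The key structural observation is that, because $\tilde h_i$ vanishes on every $(k-1)$-face not containing $\sigma_i$, the coboundary $\delta(\tilde h_i)$ is supported on $k$-faces containing $\sigma_i$, and its localization to the link $X_{\sigma_i}$ is exactly $\delta(h_i)$. Thus $f_{i+1}$ coincides with $f_i$ off the star of $\sigma_i$, while $(f_{i+1})_{\sigma_i} = (f_i)_{\sigma_i} - \delta(h_i)$, which yields the per-step identity
\[
\norm{f_i} - \norm{f_{i+1}} \;=\; \Pr[\sigma_i \subseteq r_k] \cdot \Delta_i, \qquad \Delta_i := \norm{(f_i)_{\sigma_i}}_{X_{\sigma_i}} - \norm{(f_{i+1})_{\sigma_i}}_{X_{\sigma_i}} > 0.
\]
Since $\norm{f_i}$ lives in a discrete subset of $[0,1]$ and is strictly decreasing, the process halts at some step $t$, and properties 2 and 3 then follow immediately: property 2 from the stopping rule, and property 3 from the monotone decrease $\norm{f - \delta(g)} = \norm{f_t} \le \norm{f_0} = \norm{f}$.

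For property 1 the plan is to amortize each step's cost $\norm{\tilde h_i}$ against its progress $\norm{f_i} - \norm{f_{i+1}}$ and telescope. A direct computation gives $\norm{\tilde h_i} = \Pr[\sigma_i \subseteq r_{k-1}] \cdot \norm{h_i}_{X_{\sigma_i}}$, and the dimension ratio $\Pr[\sigma_i \subseteq r_{k-1}]/\Pr[\sigma_i \subseteq r_k] = (k+1-|\sigma_i|)/(k+1) \le 1$ is harmless. So the amortization reduces to a local bound $\norm{h_i}_{X_{\sigma_i}} \le Q^2 \cdot \Delta_i$ in the link, after which telescoping gives $\norm{g} \le \sum_i \norm{\tilde h_i} \le Q^2 \sum_i (\norm{f_i} - \norm{f_{i+1}}) \le Q^2 \norm{f}$.

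The main obstacle will be the local bound $\norm{h_i}_{X_{\sigma_i}} \le Q^2 \Delta_i$: a minimal witness can a priori have macroscopic norm while delivering only a tiny improvement $\Delta_i$. The bound has to be extracted from the $Q$-bounded-degree structure of the link (which caps the number of faces through which $h_i$ can interact with $f_i$), combined with a careful choice of $h_i$ — for instance, the one obtained by iteratively applying a most-common-value shift inside $X_{\sigma_i}$. Doing this over a general ring $R$, rather than over $\mathbb{F}_2$, forces one to track signs carefully — the orientation subtlety emphasized in the introduction — so that hidden cancellations in $\delta(h_i)$ cannot inflate $\norm{h_i}$ relative to the progress it produces.
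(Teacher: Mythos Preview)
Your iterative procedure and the handling of properties 2 and 3 are correct and coincide with the paper's argument (which is phrased as an induction on $\norm{f}$, but is the same process). The difference is in property 1, where you have made the problem harder than it is.

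You reduce property 1 to a local inequality $\norm{h_i}_{X_{\sigma_i}} \le Q^2 \Delta_i$ and then, quite reasonably, worry that a witness $h_i$ might have macroscopic link-norm while producing only a tiny improvement $\Delta_i$. The paper never attempts such a bound. Instead it uses two crude observations that bypass the issue entirely. First, because $\supp(\tilde h_i)$ is contained in the star of $\sigma_i$, the $Q$-bounded degree hypothesis gives a uniform upper bound
\[
\norm{\tilde h_i} \;\le\; \frac{Q^2}{|X(d)|\binom{d+1}{k+1}}
\]
that depends on nothing about $h_i$ except its support. Second, the global norm $\norm{f_i}$ takes values in the lattice $\frac{1}{|X(d)|\binom{d+1}{k+1}}\mathbb{Z}$, so the strict decrease $\norm{f_{i+1}} < \norm{f_i}$ forces
\[
\norm{f_i} - \norm{f_{i+1}} \;\ge\; \frac{1}{|X(d)|\binom{d+1}{k+1}}.
\]
Combining these gives $\norm{\tilde h_i} \le Q^2(\norm{f_i} - \norm{f_{i+1}})$ directly, and your telescoping finishes the job. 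There is no need to choose $h_i$ carefully, no ``most-common-value shift,'' and no orientation subtlety to track: any $h_i$ that witnesses non-minimality in the link works, because the cost of the step is bounded by the size of the star, not by anything intrinsic to $h_i$.
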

\begin{proof}
We prove by induction on $\norm{f}$. For the base case, $\norm{f} = 0$, the claim holds trivially for $g = 0$. Assume the claim holds for any cochain $f'$ with $\norm{f'} < \norm{f}$. Now, if $f$ is locally minimal, the claim holds for $g = 0$. Otherwise, there exists some $\sigma \in X$ such that $f_\sigma$ is not minimal in $X_\sigma$. So there exists a cochain in the link of $\sigma$, $h \in C^{k-1-|\sigma|}(X_\sigma;R)$, such that $\norm{f_\sigma + \delta(h)} < \norm{f_\sigma}$. Define $g' \in C^{k-1}(X;R)$ by $g'(\sigma\tau) = h(\tau)$ for any $\tau \in X_\sigma$ and for any other face $g'(\tau) = 0$. It follows that $\norm{f + \delta(g')} < \norm{f}$. By the induction assumption, there exists $g'' \in C^{k-1}(X;R)$ such that:
\begin{enumerate}
	\item $\norm{g''} \le Q^2\norm{f + \delta(g')}$.
	\item $f + \delta(g') + \delta(g'') = f + \delta(g' + g'')$ is locally minimal.
	\item $\norm{f + \delta(g') + \delta(g'')} = \norm{f + \delta(g' + g'')} \le \norm{f + \delta(g')} < \norm{f}$.
\end{enumerate}
Denote by $g = g' + g''$, and note that conditions 2 and 3 are satisfied. As for condition 1, note that since
$$\norm{f} =
\sum_{\sigma \in \text{supp}(f)}\Pr[r_k = \sigma] =
\sum_{\sigma \in \text{supp}(f)}\sum_{\substack{\tau \in X(d)\\\tau \supset \sigma}}\Pr[r_d = \tau \wedge r_k = \sigma] =
\sum_{\sigma \in \text{supp}(f)}\sum_{\substack{\tau \in X(d)\\\tau \supset \sigma}}\frac{1}{|X(d)|\binom{d+1}{k+1}},$$
Then
\begin{equation}\label{eq:fix-cochain-to-locally-minimal-is-small-1}
\norm{f + \delta(g')} \le \norm{f} - \frac{1}{|X(d)|\binom{d+1}{k+1}},
\end{equation}
And also, since $\supp(g')$ contains only faces which contain $\sigma$, then
\begin{equation}\label{eq:fix-cochain-to-locally-minimal-is-small-2}
\begin{aligned}
\norm{g'} &\le
\sum_{\substack{\tau \in X(k) \\ \tau \supset \sigma}}\Pr[r_k = \tau] =
\sum_{\substack{\tau \in X(k) \\ \tau \supset \sigma}}\sum_{\substack{\rho \in X(d) \\ \rho \supset \tau}}\Pr[r_d = \rho \wedge r_k = \tau] \\&=
\sum_{\substack{\tau \in X(k) \\ \tau \supset \sigma}}\sum_{\substack{\rho \in X(d) \\ \rho \supset \tau}}\frac{1}{|X(d)|\binom{d+1}{k+1}} \le
\frac{Q^2}{|X(d)|\binom{d+1}{k+1}}.
\end{aligned}
\end{equation}

Combining~\eqref{eq:fix-cochain-to-locally-minimal-is-small-1} and~\eqref{eq:fix-cochain-to-locally-minimal-is-small-2} yields
$$\norm{g} \le
\norm{g'} + \norm{g''} \le
\frac{Q^2}{|X(d)|\binom{d+1}{k+1}} +
Q^2\left(\norm{f} - \frac{1}{|X(d)|\binom{d+1}{k+1}}\right) =
Q^2\norm{f},$$
and condition 1 is satisfied as well.
\end{proof}

Now we can show the expansion criterion of any cochain (up to one dimension less) with respect to the cocycles.

\begin{proposition}[Small-set expansion implies cocycle expansion for one dimension less]\label{pro:small-set-expansion-implies-cocyle-expansion-for-one-dimension-less}
Let $X$ be a $d$-dimensional $Q$-bounded degree $(\varepsilon,\mu)$-small-set expander over a ring $R$. For any $f \in C^k(X;R) \setminus Z^k(X;R)$, $0 \le k \le d-2$, it holds that $$\frac{\norm{\delta(f)}}{\dist(f, Z^k(X;R))} \ge \min\{\mu,Q^{-2}\}.$$
\end{proposition}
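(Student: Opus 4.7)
The plan is to apply Lemma~\ref{lem:fix-cochain-to-locally-minimal-is-small} to $\delta(f) \in C^{k+1}(X;R)$, which is legitimate because the hypothesis $k \le d-2$ gives $k+1 \le d-1$. This produces $g \in C^k(X;R)$ such that $\delta(f) - \delta(g) = \delta(f-g)$ is locally minimal, $\norm{g} \le Q^2\norm{\delta(f)}$, and $\norm{\delta(f-g)} \le \norm{\delta(f)}$. The philosophy is to use $f - g$ as a candidate nearby cocycle: if we can force it to actually be a cocycle, then $\dist(f,Z^k) \le \norm{g}$ gives the $Q^{-2}$ bound, and otherwise we will show $\norm{\delta(f)}$ is already bounded below by $\mu$.

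I would then split into two cases according to the size of $\norm{\delta(f-g)}$. In the first case, $\norm{\delta(f-g)} \le \mu$, the local minimality of $\delta(f-g)$ together with the small-set expander property at dimension $k+1$ yields $\norm{\delta(\delta(f-g))} \ge \varepsilon\norm{\delta(f-g)}$. But $\delta \circ \delta = 0$, so this forces $\norm{\delta(f-g)} = 0$, hence $f - g \in Z^k(X;R)$. Therefore
\[
\dist(f, Z^k(X;R)) \le \norm{f - (f-g)} = \norm{g} \le Q^2 \norm{\delta(f)},
\]
which gives $\norm{\delta(f)}/\dist(f, Z^k(X;R)) \ge Q^{-2}$. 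In the remaining case $\norm{\delta(f-g)} > \mu$, I use that $\norm{\delta(f)} \ge \norm{\delta(f)-\delta(g)} = \norm{\delta(f-g)} > \mu$, combined with the trivial bound $\dist(f,Z^k(X;R)) \le \norm{f - 0} \le 1$ (since $0 \in Z^k$ and every norm here is a probability). This yields $\norm{\delta(f)}/\dist(f,Z^k(X;R)) > \mu$. Combining the two cases gives the desired lower bound of $\min\{\mu, Q^{-2}\}$.

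The main conceptual step is recognizing that applying small-set expansion to $\delta(f-g)$ and then invoking $\delta^2 = 0$ collapses a locally minimal coboundary of a coboundary to zero, thereby converting ``small $\delta(f-g)$'' into ``$f-g$ is an actual cocycle.'' The only subtlety worth checking is the dimension bookkeeping (which is why the statement requires $k \le d-2$ rather than $k \le d-1$): we need $\delta(f) \in C^{k+1}$ to lie in a range where the small-set expansion hypothesis is available, and $g$ produced by the lemma lands back in $C^k$ where $Z^k$ lives. Beyond this, all of the ingredients -- the fix-to-locally-minimal lemma, the definition of small-set expansion, and $\delta^2=0$ -- are off-the-shelf, so the proof is essentially a one-page case split with no technical obstacle.
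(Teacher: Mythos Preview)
Your proof is correct and follows essentially the same approach as the paper: apply Lemma~\ref{lem:fix-cochain-to-locally-minimal-is-small} to $\delta(f)$, then use small-set expansion at level $k+1$ together with $\delta^2=0$ to force $f-g\in Z^k$ and bound $\dist(f,Z^k)$ by $\norm{g}\le Q^2\norm{\delta(f)}$. The only cosmetic difference is that the paper splits first on whether $\norm{\delta(f)}\ge\mu$, whereas you split on whether $\norm{\delta(f-g)}\le\mu$ after applying the lemma; since $\norm{\delta(f-g)}\le\norm{\delta(f)}$, the two case analyses are equivalent.
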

\begin{proof}
Let $f \in C^k(X;R) \setminus Z^k(X;R)$, $0 \le k \le d-2$. If $\norm{\delta(f)} \ge \mu$ we are done, so assume that $\norm{\delta(f)} < \mu$. Let $g \in C^k(X;R)$ be the $k$-cochain promised by lemma~\ref{lem:fix-cochain-to-locally-minimal-is-small} when applied on $\delta(f)$. By properties 2 and 3 of lemma~\ref{lem:fix-cochain-to-locally-minimal-is-small}, $\delta(f) - \delta(g) = \delta(f - g)$ is a $(k+1)$-cochain which is locally minimal and $\norm{\delta(f - g)} \le \mu$. By the small-set expansion,
$$0 = \norm{\delta(\delta(f - g))} \ge \varepsilon\norm{\delta(f - g)},$$
so $\delta(f - g) = 0$, which means that $f - g \in Z^k(X;R)$. By property 3 of lemma~\ref{lem:fix-cochain-to-locally-minimal-is-small} we know that $\norm{g} \le Q^2\norm{\delta(f)}$, which yields
$$\norm{\delta(f)} \ge
Q^{-2}\norm{g} =
Q^{-2}\norm{f - (f - g)} \ge
Q^{-2}\cdot \dist(f, Z^k(X;R)).$$
\end{proof}

The following is an immediate corollary of propositions~\ref{pro:small-set-expansion-implies-large-non-trivial-cocycles} and~\ref{pro:small-set-expansion-implies-cocyle-expansion-for-one-dimension-less}.

\begin{corollary}[Small-set expansion implies cosystolic expansion for one dimension less]
If $X$ is a $d$-dimensional $Q$-bounded degree $(\varepsilon,\mu)$-small-set expander over a ring $R$, then the $(d-1)$-skeleton of $X$ is a $(\min\{\mu,Q^{-2}\}, \mu)$-cosystolic expander over $R$.
\end{corollary}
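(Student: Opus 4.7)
The plan is to simply assemble the two propositions already proved in this subsection. Cosystolic expansion of the $(d-1)$-skeleton requires two things: a cocycle expansion bound for every $f \in C^k \setminus Z^k$ with $0 \le k \le (d-1)-1 = d-2$, and a lower bound on the norm of every non-trivial cocycle $z \in Z^k \setminus B^k$ in the same range of $k$. Both pieces are already at our disposal.

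First I would invoke Proposition~\ref{pro:small-set-expansion-implies-cocyle-expansion-for-one-dimension-less} directly: under the $(\varepsilon,\mu)$-small-set expansion hypothesis together with the $Q$-bounded degree assumption, it gives exactly the inequality
$$\frac{\norm{\delta(f)}}{\dist(f, Z^k(X;R))} \ge \min\{\mu, Q^{-2}\}$$
for every $f \in C^k(X;R) \setminus Z^k(X;R)$ with $0 \le k \le d-2$. This is condition~1 of Definition~\ref{def:cosystolic-expansion} applied to the $(d-1)$-skeleton with expansion constant $\min\{\mu, Q^{-2}\}$.

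Next I would apply Proposition~\ref{pro:small-set-expansion-implies-large-non-trivial-cocycles}, which tells us that any non-trivial cocycle $z \in Z^k(X;R) \setminus B^k(X;R)$, $0 \le k \le d-1$, satisfies $\norm{z} \ge \mu$. Restricting to $0 \le k \le d-2$ yields condition~2 of Definition~\ref{def:cosystolic-expansion} for the $(d-1)$-skeleton with cosystolic constant $\mu$. Note that the cochains, cocycles and coboundaries of the $(d-1)$-skeleton in dimensions $0 \le k \le d-2$ are identical to those of $X$ in the same dimensions, since these groups depend only on faces of dimension $\le d-1$; so the bounds transfer verbatim.

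There is no real obstacle here, as the corollary is purely a repackaging: combining the two propositions with the two cosystolic parameters $(\min\{\mu, Q^{-2}\}, \mu)$ yields the stated conclusion. The only small point to keep in mind is the dimension bookkeeping between $X$ and its $(d-1)$-skeleton, which merely restricts the range of $k$ from $\{0,\dots,d-1\}$ to $\{0,\dots,d-2\}$ and so weakens, rather than strengthens, what Propositions~\ref{pro:small-set-expansion-implies-large-non-trivial-cocycles} and~\ref{pro:small-set-expansion-implies-cocyle-expansion-for-one-dimension-less} already provide.
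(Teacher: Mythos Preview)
Your proposal is correct and matches the paper's approach exactly: the paper states this corollary as an immediate consequence of Propositions~\ref{pro:small-set-expansion-implies-large-non-trivial-cocycles} and~\ref{pro:small-set-expansion-implies-cocyle-expansion-for-one-dimension-less} without giving any further argument, and your write-up simply unpacks that one-line remark. The only addition you make is the observation about the $(d-1)$-skeleton sharing its $C^k$, $Z^k$, $B^k$ with $X$ for $k\le d-2$, which is a reasonable bookkeeping remark the paper leaves implicit.
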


\subsection{Good links imply small-set expansion}

We follow the strategy of~\cite{EK16} in order to show that good links imply small-set expansion. The following theorem is what we observe as the key point of the proof. It shows that if all the links are skeleton expanders, then for small cochains there exists a dimension which most of the global expansion is determined by local expansion in the links of that dimension.

\begin{theorem}[Existence of good dimension]\label{thm:existence-of-good-dimension}
Let $X$ be a $d$-dimensional simplicial complex, $R$ a ring and $0 \le k \le d-1$. If for any $\sigma \in X$, the link $X_\sigma$ is an $\alpha$-skeleton expander, then for any constants $0 = c_{-1} \le c_0 \le \dotsb c_k \le 1$ and $k$-cochain $f \in C^k(X;R)$, if $f$ is locally minimal and $\norm{f} \le \alpha$, then there exists $0 \le i \le k$ such that
$$\norm{\delta(f)} \ge
\left(\beta_ic_i - (k+1-i)(i+1)c_{i-1} - \alpha^{2^{-d}}(k+1)(k+2)2^{k+2}\right)\norm{f},$$
where
$$\beta_i = \min\left\{\frac{\norm{\delta(g)}}{\dist(g, B^{k-|\sigma|}(X_\sigma;R))} : \sigma \in X(i),\; g \in C^{k-|\sigma|}(X_\sigma;R) \setminus B^{k-|\sigma|}(X_\sigma;R) \right\}.$$
\end{theorem}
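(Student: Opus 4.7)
The plan is to localize $f$ to every link, apply the assumed coboundary expansion $\beta_i$ dimension by dimension, and then reassemble the local inequalities into a bound on the global $\norm{\delta(f)}$ via averaging. The constants $c_i$ will serve as weights in that reassembly, and the error term $\alpha^{2^{-d}}(k+1)(k+2)2^{k+2}$ will arise by iterating the skeleton-expander hypothesis through the tower of nested links.

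For $-1 \le i \le k$ I would introduce the averages
$$N_i \; := \; \Exp{\sigma \in X(i)}\norm{f_\sigma},$$
with $N_{-1} = \norm{f}$ since $f_\emptyset = f$. By local minimality of $f$ together with lemma~\ref{lem:minimal-cochain-is-also-locally-minimal}, every $f_\sigma$ is minimal in $X_\sigma$, so whenever $f_\sigma$ is not a coboundary of the link we have $\dist(f_\sigma, B^{k-|\sigma|}(X_\sigma;R)) = \norm{f_\sigma}$, and the assumed link expansion gives $\norm{\delta(f_\sigma)} \ge \beta_i \norm{f_\sigma}$. To turn this family of link-local inequalities into a global one, I would expand the signed-sum definition of the coboundary and derive the standard identity
$$\delta(f_\sigma)(\vec{\tau}) \;=\; (-1)^{|\sigma|}(\delta f)_\sigma(\vec{\tau}) \;+\; \sum_{v \in \sigma} \pm\, f_{\sigma \setminus v}\bigl(\text{reindexed }\vec{\tau}\bigr),$$
for each ordered $\vec{\tau} \in \vec{X}_\sigma(k-|\sigma|)$. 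Taking norms, applying the triangle inequality, averaging over $\sigma \in X(i)$, and using that $\Exp{\sigma \in X(i)}\norm{(\delta f)_\sigma}$ is a fixed combinatorial multiple of $\norm{\delta(f)}$, this yields after bookkeeping a bound of the form
$$\beta_i N_i \;\le\; C_{k,i}\,\norm{\delta(f)} \;+\; (i+1)(k+1-i)\, N_{i-1}.$$

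The second ingredient is a quantitative comparison between $N_{i-1}$ and $N_i$, and this is where the $\alpha$-skeleton-expander hypothesis on every link enters. If the set of ``heavy'' faces $\sigma \in X(i)$, those for which $f_\sigma$ occupies an appreciable fraction of its link, were numerous, then pairing heavy vertices inside the link of $\sigma \setminus v$ would violate definition~\ref{def:skeleton-expansion}. Cascading this mixing-lemma argument through the $d$ nested levels, each application of $\norm{E(S)} \le \norm{S}^2 + \alpha\norm{S}$ roughly squares the ``small'' threshold, so after $d$ applications the residual error carries an exponent of $2^{-d}$, producing the factor $\alpha^{2^{-d}}$ that decorates the final error.

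Finally, I would multiply the dimension-$i$ inequality by the weight $c_i - c_{i-1}$ and sum over $0 \le i \le k$, so that the $N_{i-1}$ column at level $i$ telescopes against the $N_i$ column at level $i-1$; the resulting weighted sum dominates $\norm{\delta(f)}$, so by pigeonhole some single $i$ must satisfy the inequality with exactly the coefficients $\beta_i c_i - (k+1-i)(i+1)c_{i-1}$ stated in the theorem, with all the iterated skeleton-expansion slack consolidated into the single term $\alpha^{2^{-d}}(k+1)(k+2)2^{k+2}\norm{f}$. I expect the main obstacle to be precisely this last bookkeeping step: isolating one dimension while simultaneously keeping the signs in the coboundary identity consistent (so that the proof is valid over a general ring $R$ and not only over $\F_2$, as in the original argument) and ensuring that the error terms from the iterated skeleton expansion collapse to a single closed-form expression rather than growing with the depth of the recursion.
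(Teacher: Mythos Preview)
Your approach has a genuine gap. With the natural weighting on $X(i)$ (the distribution of $r_i$), one has $\norm{f_\sigma} = \Pr[r_k \in \supp(f) \mid r_i = \sigma]$, so by total probability your averages satisfy $N_i = \norm{f}$ for every $i$. There is therefore nothing to telescope: your displayed inequality $\beta_i N_i \le C_{k,i}\norm{\delta(f)} + (i+1)(k+1-i)N_{i-1}$ reads $\beta_i \norm{f} \le C_{k,i}\norm{\delta(f)} + (i+1)(k+1-i)\norm{f}$, which is vacuous once $(i+1)(k+1-i) \ge 1 > \beta_i$. The Leibniz correction $\sum_{v \in \sigma}\pm f_{\sigma\setminus v}(\cdots)$ really is of order $\norm{f}$ after averaging; it cannot be beaten by the link expansion constant $\beta_i$, and skeleton expansion will not shrink it either, since skeleton expansion controls edge densities inside small vertex sets, not the total mass of $f$.

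The paper circumvents this by localizing differently. It does not use $f_\sigma$ but the restricted cochain $(f\down\sigma)_\sigma$, where $f\down\sigma$ keeps only those faces of $\supp(f)$ that admit a chain (``ladder'') of \emph{fat} containments down to $\sigma$, fatness being defined with threshold $\eta = \alpha^{2^{-d}}$. The quantities actually tracked are $P_i := \Pr[r_k \in A\down r_i \wedge r_i \in A_i]$, which satisfy $P_k = \norm{f}$ and $P_{-1}=0$ (the hypothesis $\norm{f}\le\alpha$ forces $\emptyset$ to be non-fat), and the good dimension is selected by a threshold argument---take the largest $j$ with $P_j < c_j\norm{f}$ and set $i=j+1$---not by telescoping. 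Skeleton expansion enters only to bound the set $\Upsilon$ of $(k+1)$-faces in which two fat $i$-faces meet in a non-fat $(i-1)$-face (Proposition~\ref{pro:skeleton-expansion-implies-small-set-of-bad-faces}); away from $\Upsilon$, each error term in $\delta((f\down\sigma)_\sigma)$ either already lies in $\supp(\delta(f))$ or can be charged to $P_{i-1}$ via the ladder (Proposition~\ref{pro:coboundaries-of-links}). Your sketch gestures toward ``heavy'' faces but never introduces the ladder restriction, and without it the correction terms remain of order $\norm{f}$ and the argument cannot close.
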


This theorem is enough to prove that good links imply small-set expansion as follows.

\begin{theorem}[Good links imply small-set expansion]\label{thm:good-links-imply-small-set-expansion}
Let $X$ be a $d$-dimensional simplicial complex, $R$ a ring and $\beta > 0$. There exist $\varepsilon = \varepsilon(d,\beta)$ and $\alpha = \alpha(d,\beta)$, such that if for any $\sigma \in X$ the link $X_\sigma$ is an $\alpha$-skeleton expander and for any $\emptyset \ne \sigma \in X$ the link $X_\sigma$ is a $\beta$-coboundary expander over $R$, then $X$ is an $(\varepsilon,\alpha)$-small-set expander over $R$.
\end{theorem}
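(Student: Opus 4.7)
The plan is to invoke Theorem~\ref{thm:existence-of-good-dimension} and choose the auxiliary constants $c_0,\ldots,c_k$ and the parameters $\varepsilon,\alpha$ uniformly in $k$. Fix $0 \le k \le d-1$ and let $f \in C^k(X;R)$ be locally minimal with $\norm{f} \le \alpha$. Since every nonempty face $\sigma\in X$ has a $\beta$-coboundary expanding link, the constant $\beta_i$ appearing in Theorem~\ref{thm:existence-of-good-dimension} satisfies $\beta_i \ge \beta$ for every $0 \le i \le k$ (note that all $\sigma \in X(i)$ with $i \ge 0$ are nonempty). The theorem then produces some index $i$ for which
$$\norm{\delta(f)} \ge \bigl(\beta c_i - (k+1-i)(i+1)c_{i-1} - \alpha^{2^{-d}}(k+1)(k+2)2^{k+2}\bigr)\norm{f}.$$

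Because we have no control over which index $i$ is returned, the $c_i$'s must be chosen so that the quantity $\beta c_i - (k+1-i)(i+1)c_{i-1}$ is bounded below uniformly in $i$. The natural candidate, and essentially the only one compatible with the constraints $c_{-1}=0$ and $c_k \le 1$, is a geometric progression: set $c_{-1} := 0$ and $c_i := M^{i-k}$ for $0 \le i \le k$, where $M := 4(d+1)^2/\beta$. The sequence lies in $(0,1]$ and is nondecreasing. A short case split on whether $i=0$ (where $c_{-1}=0$ kills the middle term, leaving $\beta c_0 = \beta M^{-k}$) or $i \ge 1$ (where the two terms factor as $c_i(\beta - (k+1-i)(i+1)/M) \ge \tfrac{3\beta}{4}c_i$ by the choice of $M$) shows that this quantity is at least $\tfrac{\beta}{2} M^{-(d-1)}$ for every $0 \le i \le k \le d-1$.

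Defining $\varepsilon := \tfrac{\beta}{4} M^{-(d-1)}$ and taking $\alpha = \alpha(d,\beta) > 0$ small enough that the error term $\alpha^{2^{-d}}(d+1)(d+2)2^{d+1}$ is at most $\tfrac{\beta}{4} M^{-(d-1)}$ then yields $\norm{\delta(f)} \ge \varepsilon \norm{f}$, which is exactly the small-set expansion condition. Both $\varepsilon$ and $\alpha$ depend only on $d$ and $\beta$, as required.

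The main obstacle is purely a matter of parameter bookkeeping rather than a conceptual hurdle: the constraints $c_{-1}=0$ and $c_k \le 1$ essentially force a geometric choice of $c_i$, which in turn forces $\varepsilon$ to be exponentially small in $d$; and because the error term in Theorem~\ref{thm:existence-of-good-dimension} decays only as $\alpha^{2^{-d}}$, the threshold $\alpha$ ends up doubly-exponentially small in $d$. These quantitative losses are standard and do not affect the qualitative conclusion.
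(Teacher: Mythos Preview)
Your approach is essentially the paper's: invoke Theorem~\ref{thm:existence-of-good-dimension} and pick the $c_i$ to grow geometrically so that $\beta c_i$ dominates $(k{+}1{-}i)(i{+}1)c_{i-1}$ uniformly in $i$, then absorb the $\alpha^{2^{-d}}$ error by shrinking $\alpha$. Your pure geometric choice $c_i=M^{i-k}$ is in fact tidier than the paper's recursive definition of the $c_i$.

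There is one small slip. You assert $\beta_i\ge\beta$ for \emph{all} $0\le i\le k$ ``since every nonempty $\sigma$ has a $\beta$-coboundary expanding link.'' But for $i=k$ the quantity $\beta_k$ is a ratio for $(-1)$-cochains in $X_\sigma$, and Definition~\ref{def:coboundary-expansion} only constrains cochains in dimensions $0,\dots,\dim X_\sigma-1$; it says nothing about dimension $-1$. The paper handles this by treating $i=k$ separately and observing directly that $\|\delta(f_\sigma)\|\ge\|f_\sigma\|$ for $\sigma\in\supp(f)$ (equivalently, $\beta_k=1$: a nonzero $(-1)$-cochain has norm $1$ and its coboundary is the nonzero constant $0$-cochain, also of norm $1$). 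You can patch your argument in one line by noting $\beta_k=1$ and assuming without loss of generality that $\beta\le 1$ (if the links are $\beta$-coboundary expanders for some $\beta>1$, they are in particular $1$-coboundary expanders); this also guarantees $M>1$, which you are tacitly using for the monotonicity $c_0\le\cdots\le c_k$.
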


\begin{proof}
Let $0 < \rho < 1$,
$$
\varepsilon = (1-\rho)\left(1 + \frac{d(d-1)^{2(d-1)}}{\beta^{d-1}(1-\beta)} \right)^{\!-1},\quad\quad
\alpha = \left(\frac{\rho}{1-\rho}\cdot\frac{\varepsilon}{d(d+1)2^{d+1}}\right)^{2^d},$$
and define the following constants:
\begin{align*}
&\bullet\quad c_{-1} = 0, \\[9pt]
&\bullet\quad c_0 = \frac{\varepsilon}{(1-\rho)\beta},\\
&\bullet\quad c_i = c_0 + \frac{k^2}{\beta}c_{i-1}\quad\quad \forall i \in \{1,\dotsc,k-1\},\\[8pt]
&\bullet\quad c_k = \beta c_0 + (k+1)c_{k-1}.
\end{align*}

Note that
$$c_k = c_0\left(\beta + (k+1)\sum_{i=0}^{k-1}\left(\frac{k^2}{\beta}\right)^{\!i}\right) \le
c_0\left(\beta + \frac{(k+1)k^{2k}}{\beta^{k-1}(k^2-\beta)}\right) =
\frac{\varepsilon}{1-\rho} \left(1+\frac{(k+1)k^{2k}}{\beta^k(k^2-\beta)}\right) \le 1,
$$
so the conditions of theorem~\ref{thm:existence-of-good-dimension} are satisfied.
Let $f \in C^k(X;R)$ be a locally minimal $k$-cochain with $\norm{f} \le \alpha$, and let $0 \le i \le k$ be the good dimension promised by theorem~\ref{thm:existence-of-good-dimension}.
\begin{enumerate}
	\item If $i=k$, note that for any $\sigma \in \supp(f)$, $\norm{\delta(f_\sigma)} \ge \norm{f_\sigma}$, so theorem~\ref{thm:existence-of-good-dimension} yields
	$$\norm{\delta(f)} \ge \left(c_k - (k+1)c_{k-1} - \frac{\rho}{1-\rho}\varepsilon\right)\norm{f} \ge \varepsilon\norm{f}.$$
	\item Otherwise, by the $\beta$-coboundary expansion of the links, theorem~\ref{thm:existence-of-good-dimension} yields
	$$\norm{\delta(f)} \ge \left(\beta c_i - k^2c_{i-1} - \frac{\rho}{1-\rho}\varepsilon\right) \norm{f} \ge \varepsilon\norm{f}.$$
\end{enumerate}
\end{proof}

The rest of this section is dedicated to proving theorem~\ref{thm:existence-of-good-dimension}. We need to show that any small cochain can be decomposed into local parts such that the expansion of the local parts would imply the global expansion. In the following lemma we show that whenever all the information of a cochain is seen in a link, then its local coboundaries coincide with global coboundaries.

\begin{lemma}[Local-to-global coboundaries]\label{lem:when-link-sees-everything-coboundaries-are-global}
Let $X$ be a $d$-dimensional simplicial complex, $R$ a ring, $f\in C^k(X;R)$, $0 \le k \le d-1$, and $\sigma \in X(i)$, $i<k$. For any $\vec{\tau} \in \vec{X}_\sigma(k-i)$, if $\sigma\cup\tau\setminus\{v\} \notin \supp(f)$ for any $v\in\sigma$ and $\tau \in \supp(\delta(f_\sigma))$, then $\sigma\cup\tau\in \supp(\delta(f))$.
\end{lemma}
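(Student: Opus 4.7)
The plan is to compute $\delta(f)(\vec{\sigma\tau})$ directly from the definition of the coboundary operator and show that it equals $\pm\delta(f_\sigma)(\vec{\tau})$, which is non-zero by the hypothesis $\tau\in\supp(\delta(f_\sigma))$. Since $\sigma\in X(i)$ and $\tau\in X_\sigma(k-i)$, the concatenation $\vec{\sigma\tau}=(v_0,\dotsc,v_i,u_0,\dotsc,u_{k-i})$ is an ordered $(k+1)$-face of $X$, so this will directly yield $\sigma\cup\tau\in\supp(\delta(f))$.

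First I would expand
$$\delta(f)(\vec{\sigma\tau})=\sum_{j=0}^{i}(-1)^{j}f\bigl(\vec{\sigma\tau}\setminus\{v_{j}\}\bigr) + \sum_{j=0}^{k-i}(-1)^{i+1+j}f\bigl(\vec{\sigma\tau}\setminus\{u_{j}\}\bigr),$$
splitting the sum according to whether the deleted vertex lies in $\sigma$ or in $\tau$. The first sum can be handled by the hypothesis: each $\vec{\sigma\tau}\setminus\{v_{j}\}$ corresponds to the unordered face $\sigma\cup\tau\setminus\{v_{j}\}$ with $v_{j}\in\sigma$, which by assumption is not in $\supp(f)$, so every such term vanishes.

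For the second sum, I would observe that removing a vertex $u_{j}\in\tau$ leaves all of $\sigma$ intact, so by the definition of localization $f_\sigma(\vec{\rho})=f(\vec{\sigma\rho})$, one has
$$f\bigl(\vec{\sigma\tau}\setminus\{u_{j}\}\bigr) = f_\sigma\bigl(\vec{\tau}\setminus\{u_{j}\}\bigr).$$
Factoring out the sign $(-1)^{i+1}$ coming from the offset of the $\tau$-vertices in $\vec{\sigma\tau}$, the second sum collapses to
$$(-1)^{i+1}\sum_{j=0}^{k-i}(-1)^{j}f_\sigma\bigl(\vec{\tau}\setminus\{u_{j}\}\bigr) = (-1)^{i+1}\delta(f_\sigma)(\vec{\tau}).$$
Combining the two observations yields $\delta(f)(\vec{\sigma\tau})=(-1)^{i+1}\delta(f_\sigma)(\vec{\tau})$, and since $\tau\in\supp(\delta(f_\sigma))$ this is non-zero, proving $\sigma\cup\tau\in\supp(\delta(f))$.

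This lemma is essentially bookkeeping, and I do not anticipate a real obstacle. The only mildly delicate point is to keep track of the sign arising from the position of the $\tau$-vertices inside $\vec{\sigma\tau}$; since $R$ may be arbitrary and not just $\F_2$, signs matter, but they ultimately only contribute a global factor of $\pm1$ and therefore do not affect membership in the support.
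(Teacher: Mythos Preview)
Your proposal is correct and follows essentially the same approach as the paper's proof: both expand $\delta(f)(\vec{\sigma\tau})$, use the hypothesis to kill the terms where a vertex of $\sigma$ is removed, and identify the remaining sum as $(-1)^{i+1}\delta(f_\sigma)(\vec{\tau})$ via the definition of localization. The only difference is notational (you label the vertices of $\tau$ as $u_0,\dotsc,u_{k-i}$ while the paper labels them $v_{i+1},\dotsc,v_{k+1}$).
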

\begin{proof}
Let us denote $\vec{\sigma}=(v_0,\dotsc,v_i)$ and $\vec{\tau}=(v_{i+1},\dotsc,v_{k+1})$ (where $\vec{\sigma}$ is the fixed ordered face corresponding to $\sigma$). Then
\begin{align*}
\delta(f)(\vec{\sigma\tau}) &=
\sum_{j=0}^{k+1}(-1)^jf(\vec{\sigma\tau}\setminus\{v_j\}) = \sum_{j=i+1}^{k+1}(-1)^jf(\vec{\sigma\tau}\setminus\{v_j\}) \\&= (-1)^{i+1}\sum_{j=0}^{k-i}(-1)^jf_\sigma(\vec{\tau}\setminus\{v_{j+i+1}\}) =
(-1)^{i+1}\delta(f_\sigma)(\vec{\tau}) \ne 0.
\end{align*}
\end{proof}

We now define a machinery of fat faces, which essentially lets us move calculations down the dimensions. Let $\eta>0$ be a fatness constant. For any subset of $k$-faces $A \subseteq X(k)$ we define the sets of \emph{fat faces} as follows. The set of fat $k$-faces is defined as $A_k = A$, and for any $-1 \le i \le k-1$ we define the set of fat $i$-faces $A_i \subseteq X(i)$ by $$A_i = \{\sigma \in X(i) \;|\; \Pr[r_{i+1} \in A_{i+1} \;|\; r_i = \sigma] \ge \eta^{2^{k-i-1}} \}.$$

The following lemma shows that for any $-1 \le i \le k-1$, the size of $A_i$ cannot be much larger than the size of $A$.

\begin{lemma}\label{lem:upper-bound-on-fat-faces}
	Let $X$ be a $d$-dimensional simplicial complex and $\eta>0$ a fatness constant. For any subset of $k$-faces $A \subseteq X(k)$, $0 \le k \le d-1$, and $-1 \le i \le k-1$,
	$$\norm{A_i} \le \eta^{1-2^{k-i}}\norm{A}.$$
\end{lemma}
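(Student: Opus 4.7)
The plan is to prove the bound by reverse induction on $i$, starting at $i=k$ and descending to $i=-1$. For the base case $i=k$, we have $A_k = A$ and $\eta^{1-2^{k-k}} = \eta^0 = 1$, so the claim $\norm{A_k} \le \norm{A}$ is trivial. For the inductive step, we assume $\norm{A_{i+1}} \le \eta^{1-2^{k-i-1}}\norm{A}$ and aim to deduce $\norm{A_i} \le \eta^{1-2^{k-i}}\norm{A}$.

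The key identity I would use is a law-of-total-probability expansion exploiting that in the sampling process, $r_i$ is always a subface of $r_{i+1}$. Concretely,
$$\norm{A_{i+1}} = \Pr[r_{i+1} \in A_{i+1}] = \sum_{\sigma \in X(i)} \Pr[r_i = \sigma]\cdot\Pr[r_{i+1} \in A_{i+1}\mid r_i = \sigma],$$
since conditioning on $r_i$ gives a valid partition of the probability space. Restricting this sum to fat $i$-faces $\sigma \in A_i$, the definition of $A_i$ forces each conditional probability on the right to be at least $\eta^{2^{k-i-1}}$. This immediately yields
$$\norm{A_{i+1}} \ge \sum_{\sigma \in A_i}\Pr[r_i = \sigma]\cdot \eta^{2^{k-i-1}} = \eta^{2^{k-i-1}}\norm{A_i},$$
so $\norm{A_i} \le \eta^{-2^{k-i-1}}\norm{A_{i+1}}$. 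Chaining with the induction hypothesis gives $\norm{A_i} \le \eta^{-2^{k-i-1}}\cdot \eta^{1-2^{k-i-1}}\norm{A} = \eta^{1-2^{k-i}}\norm{A}$, which closes the induction.

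There is no serious obstacle here; the argument is essentially just a Markov-type threshold estimate iterated down the skeleton. The only point that requires a little care is verifying the total-probability step: one must check that in the author's sampling model (where $r_i$ is obtained by deleting a uniformly random vertex from $r_{i+1}$), the event $\{r_{i+1} \in A_{i+1}\}$ decomposes cleanly according to the $r_i$ that sits inside $r_{i+1}$, i.e.\ that summing $\Pr[r_i = \sigma]$ over all $i$-faces of any fixed $r_{i+1}$ gives $1$. This is immediate from the construction, and once it is in hand the rest is bookkeeping of the exponents $2^{k-i-1}$.
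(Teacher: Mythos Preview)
Your proof is correct and essentially identical to the paper's: both establish the one-step inequality $\norm{A_j} \le \eta^{-2^{k-j-1}}\norm{A_{j+1}}$ (you via the law of total probability, the paper via the equivalent Bayes-type identity $\Pr[r_j\in A_j]=\Pr[r_{j+1}\in A_{j+1}\wedge r_j\in A_j]/\Pr[r_{j+1}\in A_{j+1}\mid r_j\in A_j]$) and then iterate down from $j=k-1$ to $j=i$. The only cosmetic difference is that the paper phrases the iteration as ``applying \eqref{eq:upper-bound-on-fat-faces-1} iteratively'' while you frame it as reverse induction.
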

\begin{proof}
	By laws of probability, for any $-1\le j \le k-1$,
	\begin{equation}\label{eq:upper-bound-on-fat-faces-1}
	\Pr[r_j \in A_j] =
	\frac{\Pr[r_{j+1} \in A_{j+1} \wedge r_j \in A_j]}{\Pr[r_{j+1} \in A_{j+1} \;|\; r_j \in A_j]} \le \eta^{-2^{k-j-1}}\Pr[r_{j+1} \in A_{j+1}].
	\end{equation}
	Applying \eqref{eq:upper-bound-on-fat-faces-1} iteratively for $j = i,i+1,\dotsc,k-1$ finishes the proof.
\end{proof}

For any $\sigma \in X(i)$, $-1 \le i \le k$, we denote by $A\down \sigma\subseteq A$ the set of faces in $A$ which have a sequence of containments (in~\cite{EK16} it is called a ladder) of fat faces down to $\sigma$, formally,
$$A\down \sigma = \{\tau \in A \;|\; \exists \tau_{k-1}\in A_{k-1},\dotsc,\tau_{i+1}\in A_{i+1} \mbox{ s.t. }\tau \supset \tau_{k-1} \supset \dotsb \supset \tau_{i+1} \supset \sigma \}.$$

Recall that for a $k$-cochain $f \in C^k$, we denote its support by $A = \supp(f)$. So we also define $f\down\sigma$ to be the restriction of $f$ to $A\down\sigma$, formally,
$$
(f\down\sigma)(\vec{\tau}) =
\begin{cases}
f(\vec{\tau}) & \tau \in A\down \sigma, \\
0	& \mbox{otherwise}.
\end{cases}
$$

A good situation for us is that for any two fat faces which intersect on a codimension $1$ face, their intersection is a fat face. This essentially allows us to move calculations down the dimensions. We denote by $\Upsilon \subseteq X(k+1)$ the set of bad $(k+1)$-faces, for which a bad situation exists, formally,
$$\Upsilon = \{\tau \in X(k+1) \;|\; \exists \sigma,\sigma' \subset \tau \mbox{ s.t. } \sigma,\sigma' \in A_i \mbox{ and } \sigma \cap \sigma' \in X(i-1)\setminus A_{i-1} \}.$$

In the following proposition we show how we use this machinery of fat faces. The idea is that either we get a lot of expansion from a certain dimension or we can move down one dimension lower.

\begin{proposition}\label{pro:coboundaries-of-links}
Let $X$ be a $d$-dimensional simplicial complex, $R$ a ring and $\eta>0$ a fatness constant. For any $f \in C^k(X;R)$, $0 \le k \le d-1$, and $0 \le i \le k$,
\begin{align*}
\norm{\delta(f)} \ge{} &\min_{\sigma \in A_i}\left\{\frac{\norm{\delta((f\down\sigma)_{\sigma})}}{\norm{(f\down\sigma)_{\sigma}}}\right\} \Pr[r_k \in A\down r_i \wedge r_i \in A_i] -{} \\[5pt] &(k+1-i)(i+1)\Pr[r_k \in A\down r_{i-1} \wedge r_{i-1} \in A_{i-1}] - \norm{\Upsilon}.
\end{align*}
\end{proposition}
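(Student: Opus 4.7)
The plan is to connect the global expansion $\norm{\delta(f)}$ to the local coboundary norms $\norm{\delta((f\down\sigma)_\sigma)}$ by exploiting the ladder-of-fat-faces machinery and Lemma~\ref{lem:when-link-sees-everything-coboundaries-are-global}, then absorb the two failure modes into the error terms. I would introduce the event
\[\mathcal{A} = \{r_i \in A_i \text{ and } r_{k+1}\setminus r_i \in \supp(\delta((f\down r_i)_{r_i}))\}\]
and the event $\mathcal{D} = \{r_{k+1}\in\supp(\delta(f))\}$, and aim to show both $\Pr[\mathcal{A}]\ge$ (first term) and $\norm{\delta(f)}\ge \Pr[\mathcal{A}]-$ (error).

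For the first step, unfolding the definition of the norm in the link of $\sigma$ gives
\[\Pr[\mathcal{A}] = \sum_{\sigma \in A_i}\Pr[r_i = \sigma]\,\norm{\delta((f\down\sigma)_\sigma)} \ge \min_{\sigma \in A_i}\!\left\{\frac{\norm{\delta((f\down\sigma)_\sigma)}}{\norm{(f\down\sigma)_\sigma}}\right\}\sum_{\sigma \in A_i}\Pr[r_i=\sigma]\,\norm{(f\down\sigma)_\sigma}.\]
Because $\supp((f\down\sigma)_\sigma)$ corresponds (read in the link) to $\{\tau\supset\sigma:\tau\in A\down\sigma\}$, the last sum is exactly $\Pr[r_k\in A\down r_i\wedge r_i\in A_i]$, yielding the first term of the proposition. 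Since $\norm{\delta(f)}=\Pr[\mathcal{D}]\ge \Pr[\mathcal{A}]-\Pr[\mathcal{A}\setminus\mathcal{D}]$, it now suffices to bound $\Pr[\mathcal{A}\setminus\mathcal{D}]$ by $(k+1-i)(i+1)\Pr[r_k\in A\down r_{i-1}\wedge r_{i-1}\in A_{i-1}]+\norm{\Upsilon}$.

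To identify the failure modes, I would split $\delta(f)=\delta(f\down r_i)+\delta(f-f\down r_i)$ and apply Lemma~\ref{lem:when-link-sees-everything-coboundaries-are-global} to $f\down r_i$ with $\sigma=r_i$: provided no $v\in r_i$ has $r_{k+1}\setminus\{v\}\in A\down r_i$, the lemma produces $\delta(f\down r_i)(\vec{r_{k+1}})\ne 0$; and provided no $v\in r_{k+1}$ has $r_{k+1}\setminus\{v\}\in A\setminus A\down r_i$, the correction $\delta(f-f\down r_i)(\vec{r_{k+1}})$ vanishes. Hence on $\mathcal{A}\setminus\mathcal{D}$ at least one of the following holds:
\[\textrm{(B}_1\textrm{)}\ \exists v\in r_i:\ r_{k+1}\setminus\{v\}\in A,\qquad\textrm{(B}_2\textrm{)}\ \exists v\in r_{k+1}\setminus r_i:\ r_{k+1}\setminus\{v\}\in A\setminus A\down r_i.\]

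For $\Pr[\mathcal{A}\cap B_1]$, I would run a union bound over the $(i+1)$ choices of $v\in r_i$ and the $(k+1-i)$ choices of witness $w\in r_{k+1}\setminus r_i$ supplied by $\mathcal{A}$ (those with $r_{k+1}\setminus\{w\}\in A\down r_i$), then reparametrize $(r_{k+1},r_i,v)\mapsto(r_k,r_{i-1}):=(r_{k+1}\setminus\{v\},r_i\setminus\{v\})$; the multinomial ratio works out to $1$, so the marginal on $(r_k,r_{i-1})$ matches the sampling distribution. The witness ladder from $r_{k+1}\setminus\{w\}$ down to $r_i$ can be ``rerouted'' through $r_{i-1}$ using $r_{k+1}\setminus\{v\}\in A$ to certify $r_k\in A\down r_{i-1}$, while the two $i$-subfaces $r_i$ and an appropriate fat $i$-face inherited from the ladder force $r_{i-1}\in A_{i-1}$ whenever $r_{k+1}\notin\Upsilon$, giving the $(i+1)(k+1-i)$ bound. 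For $\Pr[\mathcal{A}\cap B_2]$, I would prove the inclusion $\mathcal{A}\cap B_2\subseteq\{r_{k+1}\in\Upsilon\}$: walking up from $r_i$ along the witness ladder guaranteed by $\mathcal{A}$ and comparing with the failure of $r_{k+1}\setminus\{v\}\in A\setminus A\down r_i$ to ladder down to $r_i$ exposes, at the level of first disagreement, two fat $i$-subfaces of $r_{k+1}$ whose intersection is an $(i-1)$-face outside $A_{i-1}$.

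The hardest step is the combinatorial bookkeeping around $B_2$: the definition of $\Upsilon$ is tailored to exactly this obstruction, but isolating the two fat $i$-subfaces witnessing $r_{k+1}\in\Upsilon$ from the two failing ladders (one succeeding to $r_i$ via $w$, one failing to $r_i$ via $v$) requires care. The $B_1$ step is easier in spirit but the reparametrization must be done precisely so that the factor $(i+1)(k+1-i)$ is neither too large (wasteful) nor too small (incorrect). Both manipulations are purely combinatorial on the fat-face hierarchy and are insensitive to the coefficient ring $R$, which is why the argument will generalize from $\mathbb{F}_2$ to arbitrary $R$ without additional work.
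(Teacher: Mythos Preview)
Your proposal is correct and follows essentially the same route as the paper: both use Lemma~\ref{lem:when-link-sees-everything-coboundaries-are-global} to pass from link coboundaries of $(f\down\sigma)_\sigma$ to $\supp(\delta(f\down\sigma))$, and then run the same trichotomy on each contributing $(k+1)$-face (bad face in $\Upsilon$; or a ladder that can be pushed one level down to some $\sigma^*\in A_{i-1}$ via the cascade ``two fat $j$-faces in $\tau\notin\Upsilon$ meeting in a $(j-1)$-face force that face to be fat'', which is exactly your ``rerouting''; or $\tau\in\supp(\delta(f))$). The only cosmetic difference is in the $B_1$ bookkeeping: the paper takes $r_k=\tau^*=r_{k+1}\setminus\{w\}$ with $w\notin\sigma$ and $r_{i-1}=\sigma\setminus\{v\}$, extracting the factor $(k+1-i)(i+1)$ as the inverse conditional probability $\Pr[r_k=\tau^*\wedge r_{i-1}=\sigma^*\mid r_{k+1}=\tau,\,r_i=\sigma]^{-1}$, whereas you set $(r_k,r_{i-1})=(r_{k+1}\setminus\{v\},r_i\setminus\{v\})$ with $v\in\sigma$ and obtain the same factor from a union bound plus the distributional identity---both variants rest on the identical cascading step and give the stated bound.
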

\begin{proof}
By lemma~\ref{lem:when-link-sees-everything-coboundaries-are-global} we know that every local coboundary of $(f\down\sigma)_\sigma$ is also a global coboundary, i.e., $\tau \in \supp(\delta((f\down\sigma)_\sigma)) \;\Rightarrow\; \sigma\cup\tau \in \supp(\delta(f\down\sigma))$. Thus,
\begin{equation}\label{eq:coboundaries-of-links-1}
\norm{\delta((f\down\sigma)_\sigma)} \le
\norm{(\delta(f\down\sigma))_\sigma} =
\Pr[r_{k+1} \in \supp(\delta(f\down\sigma)) \;|\; r_i = \sigma].
\end{equation}

Consider a face $\tau \in \supp(\delta(f\down\sigma))$. By definition, it contains at least one $k$-face $\tau^* \subset \tau$, such that $\tau^* \in A\down\sigma$.  We claim that one of the following cases must occur:
\begin{enumerate}
	\item $\tau$ is a bad face.
	\item $\sigma$ contains a fat $(i-1)$-face $\sigma^* \in A_{i-1}$, such that $\tau^* \in A\down\sigma^*$.
	\item $\tau \in \supp(\delta(f))$.
\end{enumerate}

If $\tau$ is a bad face, the claim holds, so assume that $\tau$ is not a bad face. By definition, there exists a sequence of fat faces $\tau_{k-1} \in A_{k-1}, \tau_{k-2} \in A_{k-2},\dotsc, \tau_{i+1} \in A_{i+1}$, such that $\tau \supset \tau^* \supset \tau_{k-1} \supset \dotsb \supset \tau_{i+1} \supset \sigma$. Let us denote $\tau = \{v_0,v_1,\dotsc,v_{k+1}\}$, $\tau^* = \tau \setminus \{v_{k+1}\}$, $\tau_{k-1} = \tau^* \setminus \{v_k\}$, and so on down to $\sigma = \tau_{i+1} \setminus \{v_{i+1}\}$. Now, if $\tau \setminus \{v_j\} \in A$ for some $j \in \{0,\dotsc, i\}$, then $\tau^* \setminus \{v_j\} \in A_{k-1}$ since it is the intersection of two fat $k$-faces, and then $\tau_{k-1} \setminus \{v_j\} \in A_{k-2}$, and so on down to $\sigma^* = \sigma \setminus \{v_j\} \in A_{i-1}$, and case 2 holds. Otherwise, for any $j \in \{i+1,\dotsc,k\}$, a similar argument shows that if $\tau \setminus \{v_j\} \in A$ then $\tau \setminus \{v_j\} \in A\down\sigma$. It follows that $f$ and $f\down\sigma$ agree on all $k$-faces that are contained in $\tau$, and case 3 holds. Thus,
\begin{equation}\label{eq:coboundaries-of-links-3}
\tau \in \supp(\delta(f\down\sigma)) \quad\Rightarrow\quad
(\tau \in \Upsilon) \vee (\tau^* \in A\down\sigma^*) \vee (\tau \in \supp(\delta(f))).
\end{equation}

Using~\eqref{eq:coboundaries-of-links-3} and summing over all $\tau \in \supp(\delta(f\down\sigma))$ yields
\begin{equation}\label{eq:coboundaries-of-links-4}
\begin{aligned}
\Pr[r_{k+1} \in \supp(\delta(f\down\sigma)) \;|\; &r_i = \sigma] \le{}\\[5pt]
&\Pr[r_{k+1} \in \Upsilon \;|\; r_i = \sigma] +{} \\[5pt]
&(k+1-i)(i+1)\Pr[r_k \in A\down r_{i-1} \wedge r_{i-1} = A_{i-1} \;|\; r_i = \sigma] +{} \\[5pt]
&\Pr[r_{k+1} \in \supp(\delta(f)) \;|\; r_i = \sigma],
\end{aligned}
\end{equation}
where the $(k+1-i)(i+1)$ factor is due the probability that $r_k = \tau^*$ and $r_{i-1} = \sigma^*$ given that $r_{k+1} = \tau \supset \tau^*$ and $r_i = \sigma \supset \sigma^*$. Substituting~\eqref{eq:coboundaries-of-links-1} in~\eqref{eq:coboundaries-of-links-4}, and multiplying and dividing by $\norm{(f\down\sigma)_\sigma} = \Pr[r_k \in A\down\sigma \;|\; r_i = \sigma]$ yields
\begin{equation}\label{eq:coboundaries-of-links-6}
\begin{aligned}
\frac{\norm{\delta((f\down\sigma)_\sigma)}}{\norm{(f\down\sigma)_\sigma}}\Pr[r_k \in f\down\sigma \;|\; &r_i = \sigma] \le{}\\
&\Pr[r_{k+1} \in \Upsilon \;|\; r_i = \sigma] +{} \\[6pt]
&(k+1-i)(i+1)\Pr[r_k \in A\down r_{i-1} \wedge r_{i-1} = A_{i-1} \;|\; r_i = \sigma] +{} \\[6pt]
&\Pr[r_{k+1} \in \supp(\delta(f)) \;|\; r_i = \sigma].
\end{aligned}
\end{equation}

Multiplying~\eqref{eq:coboundaries-of-links-6} by $\Pr[r_i = \sigma]$, summing over all $\sigma \in A_i$, and applying the law of total probability to the right-hand side yields
\begin{align*}
\sum_{\sigma \in A_i}\frac{\norm{\delta((f\down\sigma)_\sigma)}}{\norm{(f\down\sigma)_\sigma}}\Pr[r_k \in f\down\sigma \wedge{} &r_i = \sigma] \le{}\\[-7pt]
&\Pr[r_{k+1} \in \Upsilon] +{} \\[6pt]
&(k+1-i)(i+1)\Pr[r_k \in A\down r_{i-1} \wedge r_{i-1} = A_{i-1}] +{} \\[6pt]
&\Pr[r_{k+1} \in \supp(\delta(f))].
\end{align*}

Taking the minimum over all $\sigma \in A_i$ and rearranging completes the proof.
\end{proof}

It is left to bound the size of the bad faces. The following proposition shows that the size of the bad faces is controlled by the skeleton expansion of the links.

\begin{proposition}[Skeleton expansion implies small set of bad faces]\label{pro:skeleton-expansion-implies-small-set-of-bad-faces}
Let $X$ be a $d$-dimensional simplicial complex, $\eta>0$ a fatness constant and $0 < \alpha \le \eta^{2^{d-1}}$. If for any $\sigma \in X$, the link $X_\sigma$ is an $\alpha$-skeleton expander, then for any subset of $k$-faces $A \subseteq X(k)$, $0 \le k \le d-1$,
$$\norm{\Upsilon} \le \eta(k+1)(k+2)2^{k+2}\norm{A}.$$
\end{proposition}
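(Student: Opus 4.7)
The plan is to fix the parameter $i$ implicit in the definition of $\Upsilon$ and bound $\norm{\Upsilon}$ by a union bound over witnesses of a bad face. A face $\tau \in \Upsilon$ contains two fat $i$-faces $\sigma = \rho \cup \{v\}$ and $\sigma' = \rho \cup \{v'\}$ whose intersection $\rho$ is non-fat; equivalently, $\tau$ contains an $(i+1)$-face of the form $\rho \cup \{v,v'\}$ with $\rho \in X(i-1) \setminus A_{i-1}$ and $\{v,v'\}$ an edge of $X_\rho$ whose endpoints both lie in $S_\rho := \{u \in X_\rho(0) : \rho \cup \{u\} \in A_i\}$. The union bound then gives
$$\norm{\Upsilon} \le \sum_{\rho \in X(i-1) \setminus A_{i-1}} \sum_{e \in E(S_\rho)} \Pr[r_{k+1} \supset \rho \cup e].$$

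Next I would push the probabilities down into the link of $\rho$. Since $r_{i+1}$ is uniform over $(i+1)$-subfaces of $r_{k+1}$ given $r_{k+1}$, we have $\Pr[r_{k+1} \supset \phi] = \binom{k+2}{i+2} \Pr[r_{i+1} = \phi]$ for any $\phi \in X(i+1)$, and since $r_{i-1}$ is uniform over $(i-1)$-subfaces of $r_{i+1}$ given $r_{i+1}$, summing over $e$ yields
$$\sum_{e \in E(S_\rho)} \Pr[r_{i+1} = \rho \cup e] = \binom{i+2}{2} \Pr[r_{i-1} = \rho] \cdot \norm{E(S_\rho)}_{X_\rho}.$$

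Now comes the step that uses the hypotheses. Since $\rho \notin A_{i-1}$, the fatness threshold forces $\norm{S_\rho}_{X_\rho} < \eta^{2^{k-i}}$, and because $\alpha \le \eta^{2^{d-1}} \le \eta^{2^{k-i}}$ (using $k - i \le d - 1$ and $\eta \le 1$), the $\alpha$-skeleton expansion of $X_\rho$ gives
$$\norm{E(S_\rho)}_{X_\rho} \le \norm{S_\rho}^2 + \alpha \norm{S_\rho} \le 2 \eta^{2^{k-i}} \norm{S_\rho}_{X_\rho}.$$
Summing over $\rho$, the identity $\sum_\rho \Pr[r_{i-1} = \rho] \cdot \norm{S_\rho}_{X_\rho} = \norm{A_i}$ combined with lemma~\ref{lem:upper-bound-on-fat-faces} bounds this sum by $\eta^{1-2^{k-i}} \norm{A}$. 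The two powers of $\eta$ telescope to a single $\eta$, leaving
$$\norm{\Upsilon} \le 2 \binom{k+2}{i+2} \binom{i+2}{2} \eta \norm{A} \le (k+1)(k+2) 2^{k+2} \eta \norm{A},$$
using $\binom{k+2}{i+2} \le 2^{k+2}$ and $\binom{i+2}{2} \le (k+1)(k+2)/2$ for $i \le k$.

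The only delicate point is the exponent bookkeeping: verifying that the threshold $\eta^{2^{k-i}}$ in the definition of $A_{i-1}$ matches the $\eta^{1-2^{k-i}}$ coming out of lemma~\ref{lem:upper-bound-on-fat-faces}, and that the global assumption $\alpha \le \eta^{2^{d-1}}$ is strong enough to dominate $\norm{S_\rho}$ at every intermediate $i \in \{0,\dots,k\}$. Once the two exponents are seen to cancel, the remainder is a straightforward union bound plus one application of the skeleton (mixing) inequality per link, so I do not anticipate any further obstacle.
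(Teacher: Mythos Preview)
Your argument is essentially the paper's: locate inside each bad $(k+1)$-face the witness $(i+1)$-face $\rho\cup\{v,v'\}$, view it as an edge between fat vertices in the link $X_\rho$, apply the skeleton-expansion bound there, and then trade $\norm{A_i}$ for $\norm{A}$ via Lemma~\ref{lem:upper-bound-on-fat-faces}. The telescoping $\eta^{2^{k-i}}\cdot\eta^{1-2^{k-i}}=\eta$ and the binomial factors match the paper's computation.

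The one point to correct is that $i$ in the definition of $\Upsilon$ is not a fixed parameter but is existentially quantified over $0\le i\le k$; this is how $\Upsilon$ is used in Proposition~\ref{pro:coboundaries-of-links}, and the paper's own proof accordingly decomposes $\Upsilon=\bigsqcup_j\Upsilon_j$ by the dimension of the witness. Your computation gives the correct per-level bound, so to finish you must sum it over $i$. This costs nothing: bounding $\binom{i+2}{2}\le\binom{k+2}{2}$ uniformly and using $\sum_{i=0}^{k}\binom{k+2}{i+2}\le 2^{k+2}$ recovers exactly $(k+1)(k+2)2^{k+2}\eta\norm{A}$.
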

\begin{proof}
By definition, any bad face $\tau \in \Upsilon$ contains at least one pair of faces $\sigma,\sigma' \subset \tau$ such that $\sigma,\sigma' \in A_i$, $\sigma \cup \sigma' \in X(i+1)$, and $\sigma \cap \sigma' \in X(i-1) \setminus A_{i-1}$ for some $0 \le i \le k$. For any $\tau \in \Upsilon$, choose one such pair $\sigma,\sigma' \subset \tau$ and denote by $\widehat{\tau} = \sigma \cup \sigma'$ and by $\widecheck{\tau} = \sigma \cap \sigma'$. Note that $\widehat{\tau}$ is seen in the link of $\widecheck{\tau}$ as an edge between two fat vertices. Denote by $\Upsilon_i = \{\tau \in \Upsilon \;|\; \widehat{\tau} \in X(i) \}$, so the set of bad faces can be decomposed to $\Upsilon = \bigsqcup_{i=1}^{k+1}\Upsilon_i$. Now,

\begin{align*}
\Pr[r_{k+1} \in \Upsilon] &=
\sum_{i=1}^{k+1}\sum_{\tau \in \Upsilon_i}\Pr[r_{k+1} = \tau] =
\sum_{i=1}^{k+1}\sum_{\tau \in \Upsilon_i}\frac{\Pr[r_{k+1} = \tau \wedge r_i = \widehat{\tau} \wedge r_{i-2} = \widecheck{\tau}]}{\Pr[r_i = \widehat{\tau} \wedge r_{i-2} = \widecheck{\tau} \;|\; r_{k+1} = \tau]} \\&\le
\sum_{i=1}^{k+1}\sum_{\tau \in \Upsilon_i}\binom{k+2}{i+1}\binom{i+1}{i-1}\Pr[r_i = \widehat{\tau} \wedge r_{i-2} = \widecheck{\tau}] \\&\le
\sum_{i=1}^{k+1}\sum_{\tau \in \Upsilon_i}\binom{k+2}{i+1}\binom{i+1}{i-1}(\eta^{2^{k+1-i}} + \alpha)\Pr[r_{i-1} \in A_{i-1} \wedge r_{i-2} = \widecheck{\tau}] \\&\le
\sum_{i=1}^{k+1}\binom{k+2}{i+1}\binom{i+1}{i-1}\frac{2\eta^{2^{k+1-i}}}{\eta^{2^{k+1-i}-1}}\Pr[r_k \in A] \\&\le
(k+2)(k+1)\eta\Pr[r_k \in A]\sum_{i=1}^{k+1}\binom{k+2}{i+1},
\end{align*}
where the second inequality follows by the $\alpha$-skeleton expansion of the links, and the third inequality follows by the law of total probability and by lemma~\ref{lem:upper-bound-on-fat-faces}.
\end{proof}

We can now prove theorem~\ref{thm:existence-of-good-dimension}.

\begin{proof}[Proof of theorem~\ref{thm:existence-of-good-dimension}]
Define the fatness constant $\eta = \alpha^{2^{-d}}$. Now, let $f \in C^k(X;R)$ be a locally minimal $k$-cochain with $\norm{f} \le \alpha \le \eta^{2^{k+1}}$. By lemma~\ref{lem:upper-bound-on-fat-faces} it follows that
$$\norm{A_{-1}} \le \eta^{1-2^{k+1}}\norm{f} \le \eta < 1.$$
But since $X(-1)$ contains only one face, i.e., $\norm{A_{-1}} \in \{0,1\}$, then $\norm{A_{-1}} = 0$. In other words, the empty-set is not a fat face, thus $\Pr[r_k \in A\down r_{-1} \wedge r_{-1} \in A_{-1}] = 0$. Also note that $\Pr[r_k \in A\down r_k \wedge r_k \in A_k] = \norm{f} \ge c_k\norm{f}$.

Now, if $\Pr[r_k \in A\down r_i \wedge r_i \in A_i] \ge c_i\norm{f}$ for all $0 \le i \le k$, then applying proposition~\ref{pro:coboundaries-of-links} on $i=0$ yields
\begin{equation}\label{eq:existence-of-good-dimension-1}
\norm{\delta(f)} \ge \min_{\sigma \in A_0}\left\{\frac{\norm{\delta((f\down \sigma)_\sigma)}}{\norm{(f\down\sigma)_\sigma}}\right\}c_0 - \norm{\Upsilon}.
\end{equation}
Otherwise, let $0 \le j \le k-1$ be the maximal for which $\Pr[r_k \in A\down r_j \wedge r_j \in A_j] < c_j\norm{f}$. Applying proposition~\ref{pro:coboundaries-of-links} on $i = j + 1$ yields
\begin{equation}\label{eq:existence-of-good-dimension-2}
\norm{\delta(f)} \ge \min_{\sigma \in A_i}\left\{\frac{\norm{\delta((f\down \sigma)_\sigma)}}{\norm{(f\down\sigma)_\sigma}}\right\}c_i - (k+1-i)(i+1)c_{i-1} - \norm{\Upsilon}.
\end{equation}

Since $f$ is locally minimal, by lemma~\ref{lem:minimal-cochains-are-closed-under-inclusion}, $(f\down\sigma)_\sigma$ is minimal in $X_\sigma$ for any $\emptyset \ne \sigma \in X$. Thus,
\begin{equation}\label{eq:existence-of-good-dimension-3}
\norm{(f\down\sigma)_\sigma} = \dist((f\down\sigma)_\sigma, B^{k-|\sigma|}(X_\sigma;R)).
\end{equation}

By proposition~\ref{pro:skeleton-expansion-implies-small-set-of-bad-faces} we know that
\begin{equation}\label{eq:existence-of-good-dimension-4}
\norm{\Upsilon} \le \alpha^{2^{-d}}(k+1)(k+2)2^{k+2}\norm{f}.
\end{equation}

Substituting~\eqref{eq:existence-of-good-dimension-3} and~\eqref{eq:existence-of-good-dimension-4} in~\eqref{eq:existence-of-good-dimension-1} or~\eqref{eq:existence-of-good-dimension-2} completes the proof.
\end{proof}

\section{Spherical buildings}\label{sec:spherical-building}

Spherical buildings are very symmetrical complexes with a nice geometrical structure. An example for a spherical building is the following complex. Let $d \in \mathbb{N}$ and $q$ a prime power. Denote by $V = \F_q^d$ the $d$-dimensional vector space over $\F_q$. The vertices of the complex are proper subspaces of $V$ (i.e., not $\{0\}$ and $V$), and its faces are flags of subspaces. The resulting complex is a $(d-2)$-dimensional spherical building (since maximal flags have $d-1$ vertices). For $d=3$ this is the famous "lines versus planes" graph which is known to be an excellent expander.

Any $d$-dimensional spherical building $X$ comes with a collection of $d$-dimensional subcomplexes, called \emph{apartments}, such that all the apartments are isomorphic to each other and for any two faces in the complex there exists an apartment containing both of them. An important fact is that the size of each apartment is bounded by a constant $\theta_d$ which depends only on $d$ (and not on the number of vertices). Also, there exists a group of automorphisms $G \le Aut(X)$ which acts transitively on $X$, i.e., for any $\sigma,\sigma' \in X(k)$, $0 \le k \le d$, there exists $g \in G$ such that $g\sigma = \sigma'$.

In~\cite{EK16}, the authors showed that the spherical building is an $\alpha$-skeleton expander for $\alpha>0$ as small as we want (it is controlled by a parameter called the \emph{thickness} of the building). In~\cite{LMM16}, the authors showed that the spherical building is a coboundary expander, but only over $\F_2$. This is not enough for us as we need coboundary expansion over $\Z$. We follow their strategy and with some modifications we prove the following theorem.
\begin{theorem}\label{thm:the-spherical-building-is-coboundary-expander}
The $d$-dimensional spherical building is a $\beta$-coboundary expander over any ring for $$\beta = \big(2^d\theta_d\big)^{-1}.$$
\end{theorem}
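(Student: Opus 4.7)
The plan is to follow the proof strategy of~\cite{LMM16} for the $\F_2$ case, carrying along the signs induced by the orientations of ordered faces so that the argument goes through for cochains with values in any ring $R$. The geometric fact driving the proof is that every apartment of the spherical building is a triangulated sphere whose reduced cohomology vanishes in every dimension $0 \le k \le d-1$; such a sphere supports an explicit cone chain-contraction, and averaging these local contractions over the transitive action of $G \le \mathrm{Aut}(X)$ on pointed apartments produces a global contracting homotopy whose operator norm can be bounded by $2^d \theta_d$.

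First, I would reduce to a minimal representative: given $f \in C^k(X;R) \setminus B^k(X;R)$, pick $b^\ast \in B^k(X;R)$ minimizing $\|f - b^\ast\|$ and set $f^\ast = f - b^\ast$, so $\|f^\ast\| = \dist(f, B^k(X;R))$ and $\delta f^\ast = \delta f$. It then suffices to exhibit a linear operator $\kappa : C^k(X;R) \to C^{k-1}(X;R)$ satisfying the chain-homotopy identity $h = \delta \kappa(h) + \kappa(\delta h)$ and the operator norm bound $\|\kappa(h)\| \le 2^d \theta_d \|h\|$. Applied to $f^\ast$ this gives $f^\ast - \delta \kappa(f^\ast) = \kappa(\delta f)$, so $\|f^\ast\| \le \|\kappa(\delta f)\| \le 2^d \theta_d \|\delta f\|$, which is exactly $\beta$-coboundary expansion with $\beta = (2^d \theta_d)^{-1}$.

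Second, I would construct the local cone operators inside apartments. Fix an apartment $A$ and a vertex $v_0 \in A$. For a cochain $h$ on $A$ and an ordered $(k-1)$-face $\vec{\rho}$ of $A_{v_0}$ (the link of $v_0$ inside $A$), set $\kappa_{v_0,A}(h)(\vec{\rho}) = h(\vec{v_0 \rho})$ with the standard alternating-sign convention from the simplicial coboundary formula; on faces containing $v_0$ set $\kappa_{v_0,A}(h) = 0$. Because $A$ is a simplicial sphere with vanishing reduced cohomology in all dimensions below $d$, the standard cone computation done simplex by simplex yields the chain-homotopy identity $\delta \kappa_{v_0,A} + \kappa_{v_0,A} \delta = \mathrm{id}$ on cochains supported away from $v_0$, with signs that are correct over any ring.

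Third, I would globalize by averaging. Since $G$ acts transitively on pointed apartments, and since any two faces of $X$ lie in a common apartment, define $\kappa$ as the expectation of $\kappa_{v_0,A}$ over a uniformly random pointed apartment $(v_0, A)$, with orientations on each apartment chosen $G$-equivariantly from a fixed reference. The chain-homotopy identity survives by linearity, and the norm bound $\|\kappa\| \le 2^d \theta_d$ follows by counting: for a target $(k-1)$-face $\sigma$, the pointed apartments $(v_0, A)$ contributing to $\kappa(h)(\sigma)$ correspond to pairs consisting of a $k$-face containing $\sigma$ together with a choice of apex $v_0$ in a common apartment, and the number of such configurations per apartment is bounded by $\theta_d$, with an additional factor of at most $2^d$ arising from the normalization of the pointed-apartment average. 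The main obstacle I expect is exactly this sign-consistency under averaging: in~\cite{LMM16} every cone contribution carries coefficient $+1$ and trivially adds up, but over a general ring one must choose orientations on apartments that are preserved by the $G$-action and compatible with the fixed global orderings from~\secref{sec:preliminaries}, so that contributions from different pointed apartments combine coherently rather than cancelling. Once the orientations are set up correctly, the norm bound is a combinatorial count insensitive to the choice of $R$.
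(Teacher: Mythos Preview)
There is a genuine gap in your second step. The vertex-cone formula $\kappa_{v_0,A}(h)(\vec\rho) = h(\vec{v_0\rho})$ defines a chain contraction only when $A$ is a cone over $v_0$, i.e., when $\{v_0\}\cup\rho$ is a face of $A$ for every face $\rho$ not containing $v_0$. But an apartment of a spherical building is a triangulated $d$-sphere, not a cone: for most $(k-1)$-faces $\rho$ of $A$, the set $\{v_0\}\cup\rho$ is simply not a face of $A$, so your formula is undefined there, and even where it is defined the identity $\delta\kappa_{v_0,A}+\kappa_{v_0,A}\delta=\mathrm{id}$ fails outside the star of $v_0$. Vanishing reduced cohomology in dimensions $<d$ guarantees that \emph{some} contracting homotopy exists on $A$, but not via the explicit vertex-cone formula you wrote down; your sentence ``the standard cone computation done simplex by simplex yields the chain-homotopy identity'' is exactly the step that breaks.

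The paper avoids this by a genuinely different construction. The contraction $\iota_\sigma$ is indexed by a top-dimensional face $\sigma\in X(d)$, not by a pointed apartment, and its values are $\iota_\sigma(f)(\tau)=(-1)^k f(c_{\sigma,\tau})$ where the chains $c_{\sigma,\tau}$ are built \emph{inductively} (Lemma~4.5) inside the subcomplexes $A_{\sigma,\tau}$, the intersection of all apartments containing both $\sigma$ and $\tau$. These $A_{\sigma,\tau}$ have the needed filling property in every dimension below $d$ (Lemma~4.4), and the inductive definition is what produces the correct signs over an arbitrary ring. Crucially, the resulting identity $\delta\iota_\sigma+\iota_\sigma\delta=\mathrm{id}$ (Lemma~4.6) holds on all of $X$, not merely inside one apartment, so no averaging is needed to obtain the homotopy identity; the transitivity of $G$ enters only in the norm bound (Proposition~4.2). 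To salvage your outline you would have to replace the vertex cone by a bona fide contracting homotopy on each apartment and then argue that averaging such local homotopies still yields the global identity---but once you write down that homotopy and track which subcomplex each chain lives in, you are essentially reconstructing the inductive $c_{\sigma,\tau}$ of the paper.
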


The proof of theorem~\ref{thm:the-spherical-building-is-coboundary-expander} is essentially composed of two propositions. We use its geometrical structure in order to relate the coboundary of a cochain to its distance from the coboundaries. By this relation we over-count each face in the coboundary many times. Then we use the symmetrical structure of the building in order to bound these over-counts.

For any $-1 \le k \le d-1$, we denote by $\mathcal{F}_k = X(d) \times X(k)$ the set of all pairs of top faces and $k$-dimensional faces . For any $(\sigma, \tau) \in \mathcal{F}_k$, let $A_{\sigma,\tau}$ be the complex obtained by the intersection of all the apartments in $X$ which contain both $\sigma$ and $\tau$. Note that if $\tau \subset \tau'$, then $A_{\sigma,\tau} \subset A_{\sigma,\tau'}$.

The following proposition is implied by the geometrical structure of the spherical building. Each apartment of the spherical building is a sphere and any piece of it, as we defined above, is either a sphere or it is contractible. It allows us to relate the coboundary of a cochain to its distance from the coboundaries by over-counting each face as the amount of apartments containing it.

\begin{proposition}\label{pro:spherical-building-homological}
Let $X$ be a $d$-dimensional spherical building and $R$ a ring. For any $f \in C^k(X;R)$, $-1 \le k \le d-1$, and $\sigma \in X(d)$,
$$\dist(f, B^k(X;R)) \le
\sum_{\tau \in X(k)}\norm{\tau} \cdot |\supp(\delta(f)) \cap A_{\sigma,\tau}|.$$
\end{proposition}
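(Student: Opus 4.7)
The plan is to construct an explicit coboundary $b = \delta h \in B^k(X;R)$ with $\norm{f - b}$ bounded by the right-hand side, using $\sigma$ as a fixed ``apex'' throughout. The key geometric input is that each apartment in the spherical building is a triangulated $d$-sphere, and more generally each $A_{\sigma,\tau}$ is either such a sphere or a contractible piece of one, so its reduced cohomology vanishes in every dimension below the top. This lets us canonically ``cone off'' $f$ locally toward the apex $\sigma$ and convert vanishing of $\delta f$ on $A_{\sigma,\tau}$ into agreement of $\delta h$ with $f$ at $\tau$.

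Concretely, I would define $h \in C^{k-1}(X;R)$ by a cone-type construction: for each $(k-1)$-face $\nu$, restrict attention to the canonical subcomplex $A_{\sigma,\nu}$, use its trivial cohomology below the top to pick a specific primitive of $f|_{A_{\sigma,\nu}}$ relative to the apex $\sigma$, and declare its value at $\nu$ to be $h(\nu)$. Because apartments through a common face need not agree pointwise, it is crucial to canonicalize by intersecting to $A_{\sigma,\nu}$ (which is defined by intersection and so is apartment-independent) and to exploit the transitive action of $G \le \mathrm{Aut}(X)$ to ensure well-definedness. Signs must be tracked honestly because we work over an arbitrary ring $R$ with ordered faces, whereas in~\cite{LMM16} this bookkeeping was unnecessary.

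The main step is a discrete Stokes-type identity: for every $\tau \in X(k)$,
\[ f(\tau) - (\delta h)(\tau) = \sum_{\rho \in A_{\sigma,\tau}(k+1)} \varepsilon(\rho)\,\delta f(\rho), \]
for suitable signs $\varepsilon(\rho) \in \{\pm 1\}$ determined by the cone. This follows from unwinding the definition of $h$ at the $(k-1)$-subfaces of $\tau$ and recognizing that the ``boundary'' of the cone over $\tau$ inside $A_{\sigma,\tau}$ consists precisely of the $(k+1)$-faces appearing on the right. The face $\tau$ then contributes to $\supp(f - \delta h)$ only when this sum is nonzero, which forces $|\supp(\delta f) \cap A_{\sigma,\tau}| \ge 1$. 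Summing $\norm{\tau}$ over all such bad $\tau$ yields
\[ \dist(f, B^k(X;R)) \le \norm{f - \delta h} \le \sum_{\tau:\, f(\tau) \ne (\delta h)(\tau)} \norm{\tau} \le \sum_{\tau \in X(k)} \norm{\tau} \cdot |\supp(\delta f) \cap A_{\sigma,\tau}|, \]
which is the required bound.

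The main obstacle is the global consistency of the cone construction together with correct orientation tracking over $R$. Different apartments through $\sigma$ and a common lower-dimensional face could in principle give conflicting cone primitives, so one must either canonicalize by descending to $A_{\sigma,\nu}$ and invoking the transitive $G$-symmetry of the building, or verify directly that competing cone definitions differ by a genuine coboundary. Once the cone is canonical and the Stokes identity is established with correctly signed coefficients, the target inequality is immediate; the proof proper is therefore a careful implementation of these two ingredients, generalizing the $\F_2$ argument of~\cite{LMM16} by keeping track of orientations throughout.
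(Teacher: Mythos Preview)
Your approach is the paper's: one builds chains $c_{\sigma,\tau}\in C_{k+1}(A_{\sigma,\tau};R)$ inductively using the filling property of $A_{\sigma,\tau}$, defines the contraction $h(\nu)=\iota_\sigma f(\nu):=(-1)^k f(c_{\sigma,\nu})$, and proves the chain-homotopy identity $\delta\iota_\sigma+\iota_\sigma\delta=\mathrm{id}$, which is exactly your Stokes formula with right-hand side $(-1)^{k+1}(\delta f)(c_{\sigma,\tau})$. Two points in your sketch need correcting, though neither breaks the argument. First, $h(\nu)$ is \emph{not} obtained from a ``primitive of $f|_{A_{\sigma,\nu}}$'' --- such a primitive need not exist, since $f$ is not a cocycle on $A_{\sigma,\nu}$; rather one evaluates $f$ on a cone chain $c_{\sigma,\nu}$ constructed independently of $f$, and the coefficients appearing in your Stokes identity are those of $c_{\sigma,\tau}$, not necessarily $\pm1$ (irrelevant for the inequality, which only needs that nonvanishing forces $\supp(\delta f)\cap A_{\sigma,\tau}\neq\emptyset$). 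Second, the transitive $G$-action plays no role in this proposition: consistency of the cone is purely a consequence of the inductive chain construction together with the nesting $A_{\sigma,\tau_i}\subset A_{\sigma,\tau}$; the symmetry group $G$ enters only in the companion counting proposition.
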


The next proposition is implied by the symmetrical structure of the spherical building. Since it possess so many symmetries, all the apartments are spread around it evenly. This implies that any face cannot be contained in many apartments, so we can bound the number of times we over-count each face.

\begin{proposition}\label{pro:spherical-building-symmetrical}
Let $X$ be a $d$-dimensional spherical building. For any $-1 \le k \le d-1$ and $\rho \in X$,
$$\sum_{\substack{(\sigma,\tau) \in \mathcal{F}_k:\\\rho \in A_{\sigma,\tau}}} \norm{\tau} \le \theta_d \cdot |\{\sigma \in X(d) \;|\; \rho \subseteq \sigma\}|.$$
\end{proposition}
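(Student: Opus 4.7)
The strategy is to exploit the transitive action of $G$ on faces of each dimension in order to average out the role of $\rho$, thereby reducing the required inequality to the bare fact that $A_{\sigma,\tau}$ lies inside an apartment and hence has at most $\theta_d$ faces of every dimension. Throughout, write $i = |\rho|$ (so $\rho \in X(i-1)$) and let $N = |\{\sigma \in X(d) : \rho \subseteq \sigma\}|$ denote the cardinality on the right-hand side.

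First I would observe that, because $G$ acts transitively on $X(k)$, every $k$-face carries the same norm $\norm{\tau} = 1/|X(k)|$, so the left-hand side of the proposition equals $|\{(\sigma,\tau) \in \mathcal{F}_k : \rho \in A_{\sigma,\tau}\}|/|X(k)|$. By transitivity on $X(i-1)$, the quantity $N$ depends only on $i$.

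The key step is a symmetry-averaging argument. Since the automorphisms in $G$ permute apartments, any $g \in G$ with $g\rho = \rho'$ gives a bijection $(\sigma,\tau) \mapsto (g\sigma, g\tau)$ between the sets $\{(\sigma,\tau) : \rho \in A_{\sigma,\tau}\}$ and $\{(\sigma,\tau) : \rho' \in A_{\sigma,\tau}\}$, so this cardinality is constant as $\rho'$ ranges over $X(i-1)$. Summing the resulting identity over all $\rho' \in X(i-1)$ and swapping the order of summation yields
$$|X(i-1)| \cdot |\{(\sigma,\tau) : \rho \in A_{\sigma,\tau}\}| \;=\; \sum_{(\sigma,\tau) \in \mathcal{F}_k} |A_{\sigma,\tau} \cap X(i-1)|.$$
By definition $A_{\sigma,\tau}$ is an intersection of apartments and, by the building axiom, at least one apartment contains both $\sigma$ and $\tau$, so each summand on the right is at most $\theta_d$, bounding the whole sum by $\theta_d \cdot |X(d)| \cdot |X(k)|$.

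To finish, I would invoke one more double count: the pairs $(\sigma',\rho')$ with $\rho' \subseteq \sigma' \in X(d)$ and $\rho' \in X(i-1)$ can be counted as $\binom{d+1}{i} \cdot |X(d)|$ from the $\sigma'$-side and as $N \cdot |X(i-1)|$ from the $\rho'$-side (transitivity giving the uniform upper-degree $N$), hence $|X(d)|/|X(i-1)| \le N$. Plugging into the previous bound and dividing by $|X(k)|$ produces exactly $\theta_d N$, the desired inequality. I foresee no substantive obstacle: the entire argument is a clean two-step double count lubricated by the group action, using only (i) that $G$-automorphisms carry apartments to apartments, and (ii) the defining property of $A_{\sigma,\tau}$ as an intersection of apartments.
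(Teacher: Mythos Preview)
Your argument is correct. It follows the same overall strategy as the paper---use transitivity of $G$ to average the count over all faces in the orbit of $\rho$, then bound via $|A_{\sigma,\tau}| \le \theta_d$---but the packaging differs. The paper averages over $g \in G$ and isolates a separate lemma (Lemma~\ref{lem:spherical-building-many-symmetries}) bounding $|\{g : g\rho \in A_{\sigma,\tau}\}|/|G|$ by $\theta_d N / |X(d)|$ via the stabilizer $G_\rho$ and Lagrange's theorem. You instead average directly over $\rho' \in X(i-1)$ (the orbit itself) and replace the group-theoretic step by the elementary incidence double count $\binom{d+1}{i}\,|X(d)| = N\,|X(i-1)|$; this is slightly cleaner and in fact sharpens the paper's inequality $|G|/|G_\rho| \ge |X(d)|/N$ to the equality $|X(i-1)| = \binom{d+1}{i}\,|X(d)|/N$. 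Both routes are the same orbit-counting identity read from opposite ends.
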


We show first how theorem~\ref{thm:the-spherical-building-is-coboundary-expander} is implied by the above two propositions.

\begin{proof}[Proof of theorem~\ref{thm:the-spherical-building-is-coboundary-expander}]
Let $f \in C^k(X;R)$, $-1 \le k \le d-1$. By proposition~\ref{pro:spherical-building-homological},

\begin{equation}\label{eq:the-spherical-building-is-coboundary-expander-1}
\begin{aligned}
|X(d)|\cdot\dist(f,B^k(X;R)) &=
\sum_{\sigma \in X(d)} \dist(f,B^k(X;R)) \\&\le
\sum_{\sigma \in X(d)} \sum_{\tau \in X(k)} \norm{\tau} \cdot |\supp(\delta(f)) \cap A_{\sigma,\tau}| \\&=
\sum_{(\sigma,\tau) \in \mathcal{F}_k} \norm{\tau} \cdot |\supp(\delta(f)) \cap A_{\sigma,\tau}| \\&=
\sum_{\rho \in \text{supp}(\delta(f))} \sum_{\substack{(\sigma,\tau) \in \mathcal{F}_k:\\\rho \in A_{\sigma,\tau}}} \norm{\tau}.
\end{aligned}
\end{equation}

By proposition~\ref{pro:spherical-building-symmetrical},
\begin{equation}\label{eq:the-spherical-building-is-coboundary-expander-2}
\sum_{\substack{(\sigma,\tau) \in \mathcal{F}_k:\\\rho \in A_{\sigma,\tau}}} \norm{\tau} \le
\theta_d \cdot |\{\sigma \in X(d) \;|\; \rho \subseteq \sigma\}| =
\theta_d |X(d)|\binom{d+1}{k+2} \norm{\rho}
\end{equation}

Combining~\eqref{eq:the-spherical-building-is-coboundary-expander-1} and~\eqref{eq:the-spherical-building-is-coboundary-expander-2} yields
$$
\dist(f,B^k(X;R)) \le \sum_{\rho \in \text{supp}(\delta(f))} \theta_d \binom{d+1}{k+2} \norm{\rho} = \theta_d \binom{d+1}{k+2} \norm{\delta(f)},
$$
where rearranging completes the proof.
\end{proof}

\subsection{Proof of proposition~\ref{pro:spherical-building-homological}}

We recall some basic definitions of simplicial complexes. Let $X$ be a $d$-dimensional simplicial complex and $R$ a ring. For any $-1 \le k \le d$, a \emph{$k$-chain} is a linear combination of the $k$-dimensional faces with coefficients in $R$. Denote the space of $k$-chains by
$$C_k(X;R) = \left\{\sum_{\sigma \in X(k)}a_\sigma \!\cdot\! \sigma \;|\; \forall \sigma, a_\sigma \in R\right\}.$$

We fix some arbitrary orientations of the faces in $X$, so when considering a face, there is one fixed ordering of its vertices. Then, the boundary of a $k$-face $(v_0,v_1,\dotsc,v_k) \in X(k)$ is
$$\partial((v_0,\dotsc,v_k)) = \sum_{i=0}^{k} (-1)^i(v_0,\dotsc,v_{i-1},v_{i+1},\dotsc,v_k),$$
and the boundary of a $k$-chain $c \in C_k(X;R)$ is
$$\partial(c) = \sum_{\sigma \in X(k)} a_\sigma \!\cdot\! \partial(\sigma).$$

For ease of notation, for $\sigma = (v_0,\dotsc,v_k)$, we denote $\sigma_i = (v_0,\dotsc,v_{i-1},v_{i+1},\dotsc,v_k)$. Note that the boundary operator commutes with the coboundary operator defined in~\secref{sec:preliminaries}, i.e., for any $k$-cochain $f \in C^k(X;R)$ and a $(k+1)$-face $\sigma \in X(k+1)$,
$$\delta(f)(\sigma) = \sum_{i=0}^k (-1)^i f(\sigma_i) = f(\partial(\sigma)).$$

The following lemma from~\cite{LMM16} shows a nice filling property of the complexes $A_{\sigma,\tau}$ defined above.

\begin{lemma}\label{lem:filling-propoerty-of-apartments}\cite[Claim~3.5]{LMM16}
Let $X$ be a $d$-dimensional spherical building and $R$ a ring. For any $(\sigma,\tau) \in \mathcal{F}_k$, $-1 \le k \le d-1$, and an $i$-chain $c \in C_i(A_{\sigma,\tau};R)$, $0 \le i \le d-1$, if $\partial(c) = 0$ then there exists an $(i+1)$-chain $c' \in C_{i+1}(A_{\sigma,\tau};R)$ such that $\partial(c') = c$.
\end{lemma}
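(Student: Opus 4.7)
The plan is to reduce this statement to a purely topological one: the simplicial homology of $A_{\sigma,\tau}$ with $R$-coefficients vanishes in degrees $0 \le i \le d-1$, where the space $C_{-1}\cong R$ is included in the chain complex (so that the statement at degree $0$ is reduced homology). Once this is established, the lemma follows immediately from the definition of homology: a chain $c \in C_i(A_{\sigma,\tau};R)$ with $\partial c = 0$ lies in $\ker\partial$ and, by exactness, in the image of $\partial : C_{i+1}(A_{\sigma,\tau};R) \to C_i(A_{\sigma,\tau};R)$, which is exactly the desired $c'$.

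The key geometric input is classical building theory. Since any two faces of $X$ lie in a common apartment, the intersection $A_{\sigma,\tau}$ of all apartments containing both $\sigma$ and $\tau$ is precisely the \emph{combinatorial convex hull} of $\sigma\cup\tau$---the union of all chambers on minimal galleries between $\sigma$ and $\tau$---inside any single apartment containing both. Each apartment of a $d$-dimensional spherical building is a Coxeter complex of spherical type, combinatorially a $d$-sphere. Inside such a sphere the convex hull of two simplices is either (a) a proper convex cellular subcomplex, equivalently a combinatorial ball of dimension at most $d$ and hence contractible, or (b) in the extreme configuration where $\sigma$ and $\tau$ are ``opposite,'' the entire apartment itself, a combinatorial $d$-sphere.

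In case (a), contractibility yields $H_i(A_{\sigma,\tau};R)=0$ for every $i\ge 0$. In case (b), the simplicial homology of a $d$-sphere with $R$-coefficients is concentrated in degree $d$, so $H_i(A_{\sigma,\tau};R)=0$ in the range $0\le i\le d-1$. The upper cutoff $i\le d-1$ in the statement of the lemma is calibrated exactly to accommodate this borderline case; for $i = d$ the statement would fail in case (b).

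The main obstacle is establishing the convex-hull description of $A_{\sigma,\tau}$ and the ball-or-sphere dichotomy inside a spherical Coxeter complex. These are purely combinatorial/topological assertions about apartments, insensitive to the choice of coefficient ring, and they are exactly the content of~\cite[Claim~3.5]{LMM16}. Their proof (a shelling/collapsing argument carried out inside a single apartment) is $R$-linear at every step; the only task in lifting the lemma to arbitrary $R$ is to check that no step implicitly used $\F_2$ arithmetic, and it does not. Hence the argument transfers verbatim to any ring.
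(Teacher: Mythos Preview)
Your proposal is correct and matches the intended argument. Note that the paper does not supply its own proof of this lemma: it simply imports the result as \cite[Claim~3.5]{LMM16}, and the surrounding text explicitly states the geometric content you identified (``Each apartment of the spherical building is a sphere and any piece of it, as we defined above, is either a sphere or it is contractible''). Your reduction to the ball-or-sphere dichotomy for convex subcomplexes of a spherical Coxeter complex, followed by the observation that reduced homology of either vanishes in degrees $0\le i\le d-1$ over any ring, is precisely the route taken in \cite{LMM16} and is manifestly coefficient-independent, so there is nothing further to add.
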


We use this filling property in order to define a family of chains such that each two consecutive chains are related by the boundary operator.

\begin{lemma}
Let $X$ be a $d$-dimensional spherical building and $R$ a ring. There exists a family of chains $$\mathcal{C} = \{c_{\sigma,\tau} \in C_{k+1}(A_{\sigma,\tau};R) \;|\; -1 \le k \le d-1,\; (\sigma,\tau) \in \mathcal{F}_k\},$$ such that $$\partial(c_{\sigma,\tau}) = (-1)^{k+1}\tau + \sum_{i=0}^k(-1)^ic_{\sigma,\tau_i}.$$
\end{lemma}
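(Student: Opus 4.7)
The plan is to construct the family $\mathcal{C}$ by induction on $k = \dim(\tau)$, using lemma~\ref{lem:filling-propoerty-of-apartments} as the filling tool. The key observation driving the induction is that the right-hand side of the claimed identity is itself a $k$-chain supported in $A_{\sigma,\tau}$, and if one can show its own boundary vanishes, the filling lemma immediately produces the desired $(k+1)$-chain $c_{\sigma,\tau}$.

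For the base case $k=-1$, we have $\tau = \emptyset$, the sum over $i$ is empty, and the identity reduces to $\partial(c_{\sigma,\emptyset}) = \emptyset$. I would take $c_{\sigma,\emptyset}$ to be any single vertex of $A_{\sigma,\emptyset}$ (which is nonempty since $\sigma$ itself lies in it), so $\partial(c_{\sigma,\emptyset}) = \emptyset$ in the augmented chain complex.

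For the inductive step, assume $c_{\sigma,\tau'}$ has been defined for every $\tau' \in X(k-1)$ satisfying the required boundary identity, and fix $\tau \in X(k)$. Define
$$c := (-1)^{k+1}\tau + \sum_{i=0}^{k}(-1)^i c_{\sigma,\tau_i}.$$
Since $\tau_i \subset \tau$ gives $A_{\sigma,\tau_i} \subseteq A_{\sigma,\tau}$ (as noted above the lemma), and $\tau \in A_{\sigma,\tau}$ lies in the intersection of all apartments containing it, $c$ is a $k$-chain in $A_{\sigma,\tau}$. To apply lemma~\ref{lem:filling-propoerty-of-apartments} it suffices to show $\partial(c) = 0$, which then yields the desired $c_{\sigma,\tau} \in C_{k+1}(A_{\sigma,\tau};R)$ with $\partial(c_{\sigma,\tau}) = c$.

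The computation of $\partial(c)$ proceeds by expanding $\partial(\tau) = \sum_i(-1)^i \tau_i$ and invoking the inductive hypothesis on each $\partial(c_{\sigma,\tau_i})$. The contributions involving single $\tau_i$ terms combine with prefactor $(-1)^{k+1}+(-1)^k = 0$ and vanish, leaving only the double sum $\sum_{i=0}^{k}\sum_{j=0}^{k-1}(-1)^{i+j}c_{\sigma,(\tau_i)_j}$. Writing $\tau_{\{a,b\}}$ for the face with vertices $v_a, v_b$ removed ($a<b$), this face appears exactly twice in the double sum: once as $(\tau_b)_a$ with sign $(-1)^{a+b}$, and once as $(\tau_a)_{b-1}$ with sign $(-1)^{a+b-1}$. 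These cancel pairwise, so $\partial(c) = 0$. The only subtlety here is the standard $\partial\circ\partial = 0$ sign-bookkeeping, which is routine; no genuine obstacle arises, since the geometric content has been fully outsourced to the filling property of apartments in lemma~\ref{lem:filling-propoerty-of-apartments}.
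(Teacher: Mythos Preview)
Your proof is correct and follows essentially the same inductive argument as the paper: define $c_{\sigma,\emptyset}$ as an arbitrary vertex of $A_{\sigma,\emptyset}$, then at each step show that the target chain $c=(-1)^{k+1}\tau+\sum_i(-1)^ic_{\sigma,\tau_i}$ has $\partial(c)=0$ via the standard $\partial\!\circ\!\partial$ sign cancellation, and invoke lemma~\ref{lem:filling-propoerty-of-apartments} to produce $c_{\sigma,\tau}$. Your version even makes explicit a point the paper leaves implicit, namely that $c$ is supported in $A_{\sigma,\tau}$ because $\tau_i\subset\tau$ forces $A_{\sigma,\tau_i}\subseteq A_{\sigma,\tau}$, which is precisely what is needed to apply the filling lemma there.
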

\begin{proof}
We define $\mathcal{C}$ inductively. For $k=-1$, we have only the empty set $\emptyset \in X(-1)$. For any $\sigma \in X(d)$, choose an arbitrary vertex $v_\sigma \in A_{\sigma,\emptyset}(0)$ and define $c_{\sigma,\emptyset} = v_\sigma$. Then it holds that
$$\partial(c_{\sigma,\emptyset}) = \partial(v_\sigma) = (-1)^0\emptyset,$$
as required. Assume now that $\mathcal{C}$ is defined for any $-1 \le i \le k-1$. For any $(\sigma,\tau) \in \mathcal{F}_k$ define $c_{\sigma,\tau}$ as follows. Consider the $k$-chain $c = (-1)^{k+1}\tau + \sum_{i=0}^{k}(-1)^ic_{\sigma,\tau_i}$, and note that $\partial(c) = 0$ since
\begin{align*}
\partial(c) &= (-1)^{k+1}\partial(\tau) + \sum_{i=0}^k(-1)^i \partial(c_{\sigma,\tau_i}) \\&=
(-1)^{k+1}\partial(\tau) + \sum_{i=0}^k(-1)^i \left((-1)^k\tau_i + \sum_{j=0}^{k-1}(-1)^jc_{\sigma,\tau_{ij}} \right) \\[7pt]&=
(-1)^{k+1}\partial(\tau) + (-1)^k\partial(\tau) + \sum_{i>j}(-1)^{i+j}c_{\sigma,\tau_{ij}} + \sum_{i<j}(-1)^{i+j-1}c_{\sigma,\tau_{ij}} = 0.
\end{align*}

By lemma~\ref{lem:filling-propoerty-of-apartments} it follows that there exists an $(k+1)$-chain $c' \in C_{k+1}(A_{\sigma,\tau};R)$ such that $\partial(c') = c$, so define $c_{\sigma,\tau} = c'$.
\end{proof}

For any $\sigma \in X(d)$ and $0 \le k \le d$, we define the contraction operator
$\iota_\sigma = \iota_{\sigma,k}:C^k(X;R) \to C^{k-1}(X;R)$ as follows. For any $f \in C^k(X)$ and $\tau \in X(k-1)$,
$$\iota_\sigma(f)(\tau) = (-1)^kf(c_{\sigma,\tau}).$$

This contraction operator allows us to relate the coboundary of a cochain to its distance from the coboundaries, as shown in the next lemma.

\begin{lemma}\label{lem:coboundary-plus-contraction}
Let $X$ be a $d$-dimensional spherical building and $R$ a ring. For any $f \in C^k(X;R)$, $0 \le k \le d-1$, and $\sigma \in X(d)$,
$$\delta(\iota_\sigma(f)) + \iota_\sigma(\delta(f)) = f.$$
\end{lemma}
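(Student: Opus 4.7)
The plan is to verify the identity pointwise. Fix an arbitrary $\tau \in X(k)$ and compute both summands of the left-hand side evaluated at $\tau$. All signs are forced by the definitions, and the defining relation
$$\partial(c_{\sigma,\tau}) = (-1)^{k+1}\tau + \sum_{i=0}^{k}(-1)^i c_{\sigma,\tau_i}$$
of the chain family $\mathcal{C}$ is the only nontrivial input.

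For the first summand, expand the coboundary and then the contraction: $\iota_\sigma(f) \in C^{k-1}(X;R)$, so for $\tau \in X(k)$,
$$\delta(\iota_\sigma(f))(\tau) = \sum_{i=0}^{k}(-1)^i\,\iota_\sigma(f)(\tau_i) = (-1)^k\sum_{i=0}^{k}(-1)^i f(c_{\sigma,\tau_i}),$$
where the second equality uses the definition $\iota_\sigma(g)(\rho)=(-1)^{|\rho|+1}g(c_{\sigma,\rho})$ for $g \in C^{|\rho|+1}$.

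For the second summand, first extend $\delta(f)$ linearly to $(k+1)$-chains and use the duality $\delta(f)(\rho)=f(\partial \rho)$ recorded in Section~\ref{sec:spherical-building}, which immediately gives $\delta(f)(c)=f(\partial c)$ for any $c \in C_{k+1}(X;R)$. Hence, using $\iota_\sigma$ with the sign $(-1)^{k+1}$ (since $\delta(f) \in C^{k+1}$) and the defining relation for $c_{\sigma,\tau}$,
$$
\iota_\sigma(\delta(f))(\tau) = (-1)^{k+1}\delta(f)(c_{\sigma,\tau}) = (-1)^{k+1}f(\partial c_{\sigma,\tau}) = f(\tau) + (-1)^{k+1}\sum_{i=0}^{k}(-1)^i f(c_{\sigma,\tau_i}),
$$
where the $f(\tau)$ term arises from $(-1)^{k+1}\cdot(-1)^{k+1}=1$.

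Adding the two expressions, the two sums $\sum_i (-1)^i f(c_{\sigma,\tau_i})$ carry the prefactors $(-1)^k$ and $(-1)^{k+1}$ respectively, which annihilate, leaving exactly $f(\tau)$. Since $\tau \in X(k)$ was arbitrary, this proves $\delta(\iota_\sigma(f))+\iota_\sigma(\delta(f))=f$. The only place any substantive geometry enters is through the relation for $\partial(c_{\sigma,\tau})$, which was established using the filling property of Lemma~\ref{lem:filling-propoerty-of-apartments}; everything else is bookkeeping of signs, and the main thing to be careful about is that $\iota_\sigma$ uses different signs at different degrees, which is what makes the two cross-terms cancel.
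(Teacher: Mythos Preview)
Your proof is correct and follows essentially the same approach as the paper: both verify the identity pointwise at an arbitrary $\tau\in X(k)$, expand $\delta(\iota_\sigma(f))(\tau)$ and $\iota_\sigma(\delta(f))(\tau)$ directly from the definitions, invoke the defining relation $\partial(c_{\sigma,\tau}) = (-1)^{k+1}\tau + \sum_i(-1)^i c_{\sigma,\tau_i}$, and observe that the cross-terms cancel because of the opposite prefactors $(-1)^k$ and $(-1)^{k+1}$. One small remark: be careful with the meaning of $|\rho|$ in your phrase ``$\iota_\sigma(g)(\rho)=(-1)^{|\rho|+1}g(c_{\sigma,\rho})$'', since elsewhere in the paper $|\sigma|$ denotes cardinality rather than dimension; your computation is correct, but it would be cleaner to write the sign explicitly as $(-1)^k$.
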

\begin{proof}
For any $\tau \in X(k)$,
\begin{align*}
\delta(\iota_\sigma(f))&(\tau) + \iota_\sigma(\delta(f))(\tau) \\&=
\sum_{i=0}^k(-1)^i(\iota_\sigma(f))(\tau_i) + (-1)^{k+1}(\delta(f))(c_{\sigma,\tau}) \\&=
\sum_{i=0}^k(-1)^i(-1)^k f(c_{\sigma,\tau_i}) + (-1)^{k+1}f(\partial (c_{\sigma,\tau})) \\&=
(-1)^k\sum_{i=0}^k(-1)^i f(c_{\sigma,\tau_i}) + (-1)^{k+1}\left((-1)^{k+1}f(\tau) + \sum_{i=0}^k(-1)^if(c_{\sigma,\tau_i})\right) =
f(\tau).
\end{align*}
\end{proof}

We can now prove proposition~\ref{pro:spherical-building-homological}.
\begin{proof}[Proof of proposition~\ref{pro:spherical-building-homological}]
Let $f \in C^k(X;R)$, $-1 \le k \le d-1$. By lemma~\ref{lem:coboundary-plus-contraction}, for any $\sigma \in X(d)$,
\begin{equation}\label{eq:spherical-building-homological-1}
\norm{\iota_\sigma(\delta(f))} = \norm{f - \delta(\iota_\sigma(f))} \ge \dist(f, B^k(X;R)).
\end{equation}

Note that for any $\tau \in X(k)$,
\begin{align*}
\iota_\sigma(\delta(f))(\tau) \ne 0 \quad&\Rightarrow\quad
\delta(f)(c_{\sigma,\tau}) \ne 0 \\[5pt]&\Rightarrow\quad
\exists \rho \in \supp(\delta(f)) \cap \supp(c_{\sigma,\tau}) \\[5pt]&\Rightarrow\quad
\exists \rho \in \supp(\delta(f)) \cap A_{\sigma,\tau},
\end{align*}
which yields that
\begin{equation}\label{eq:spherical-building-homological-2}
\norm{\iota_\sigma(\delta(f))} =
\sum_{\tau \in \text{supp}(\iota_\sigma(\delta(f)))}\norm{\tau} \le
\sum_{\tau \in X(k)} \norm{\tau} \cdot |\supp(\delta(f)) \cap A_{\sigma,\tau}|.
\end{equation}

Combining~\eqref{eq:spherical-building-homological-1} and~\eqref{eq:spherical-building-homological-2} finishes the proof.
\end{proof}

\subsection{Proof of proposition~\ref{pro:spherical-building-symmetrical}}

The key point of the proof is that the spherical building possess so many symmetries, so for any face, only a small portion of the apartments contains it. We show it formally in the following lemma.

\begin{lemma}\label{lem:spherical-building-many-symmetries}
Let $X$ be a $d$-dimensional spherical building and $G \le Aut(X)$ the group that acts transitively on $X$. For any $\rho \in X$ and $(\sigma,\tau) \in \mathcal{F}_k$, $-1 \le k \le d-1$,
$$\frac{|\{g \in G \;|\; g\rho \in A_{\sigma,\tau}\}|}{|G|} \le
\theta_d \frac{|\{\sigma \in X(d) \;|\; \rho \subseteq \sigma\}|}{|X(d)|}.$$
\end{lemma}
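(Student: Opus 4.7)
The plan is to reduce both sides of the inequality to ratios that depend only on the dimension $j := \dim \rho$, and then finish by using the crude bound that any apartment contains at most $\theta_d$ faces. Both reductions are applications of the orbit-stabilizer theorem together with the transitivity of $G$ on each $X(j)$, which is recorded in the introductory description of spherical buildings.

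First I would rewrite the left-hand side. Since $G$ acts transitively on $X(j)$, for every $\rho' \in X(j)$ the set $\{g \in G : g\rho = \rho'\}$ is a coset of $\mathrm{Stab}_G(\rho)$, and all such cosets have size $|G|/|X(j)|$. Partitioning the numerator according to the image $g\rho$ therefore gives
$$\frac{|\{g \in G : g\rho \in A_{\sigma,\tau}\}|}{|G|} = \frac{|A_{\sigma,\tau}(j)|}{|X(j)|}.$$
Next I would rewrite the right-hand side. Transitivity on $X(j)$ forces $N_j := |\{\sigma' \in X(d) : \rho \subseteq \sigma'\}|$ to depend only on $j$, not on the particular $\rho$, and a standard double count of incident pairs $(\rho',\sigma') \in X(j) \times X(d)$ with $\rho' \subseteq \sigma'$ gives $N_j \cdot |X(j)| = \binom{d+1}{j+1} \cdot |X(d)|$. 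Hence the right-hand side equals $\theta_d \binom{d+1}{j+1}/|X(j)|$.

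It then suffices to show $|A_{\sigma,\tau}(j)| \le \theta_d$. By the defining property of a building any two faces lie in a common apartment, so at least one apartment $A$ contains both $\sigma$ and $\tau$. Since $A_{\sigma,\tau}$ is the intersection of all apartments containing $\sigma$ and $\tau$, it is contained in $A$, and because every apartment has at most $\theta_d$ faces in total one gets $|A_{\sigma,\tau}(j)| \le \theta_d \le \theta_d\binom{d+1}{j+1}$. Comparing with the two displays above finishes the argument. The lemma is really a counting identity dressed up in group-theoretic language; the only care needed is to invoke transitivity at the dimension of $\rho$ and to count faces of $A_{\sigma,\tau}$ of that exact dimension rather than the total face count, so I do not anticipate a serious obstacle.
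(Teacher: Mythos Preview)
Your proof is correct and follows essentially the same approach as the paper: both arguments combine orbit--stabilizer with the crude bound $|A_{\sigma,\tau}|\le\theta_d$. The only cosmetic difference is that you invoke transitivity on $X(j)$ to compute $|G_\rho|=|G|/|X(j)|$ exactly and then double-count to rewrite the right-hand side, whereas the paper invokes transitivity on $X(d)$ to bound $|G_\rho|\le N_j|G|/|X(d)|$ directly and leaves the right-hand side untouched; the two routes are equivalent via the identity $N_j/|X(d)|=\binom{d+1}{j+1}/|X(j)|$ that you derive.
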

\begin{proof}
For any $\rho \in X$, denote by $G_\rho = \{g \in G \;|\; g\rho = \rho\}$ the stabilizer of $\rho$, and consider the quotient $G/ G_\rho$. The elements in $G/ G_\rho$ are equivalence classes of the form $gG_\rho = \{gh \;|\; h \in G_\rho\}$. Consider a face $\sigma \in X(d)$ such that $\rho \subseteq \sigma$. The elements in $G_\rho$ can move $\sigma$ only to other $d$-dimensional faces which contain $\rho$. It follows that for any equivalence class $gG_\rho$, the number of $d$-dimensional faces that the elements in $gG_\rho$ can move $\sigma$ is bounded by the number of $d$-dimensional faces which contain $\sigma$. Since $G$ is transitive, there must be enough equivalence classes to cover all $X(d)$. Thus,

\begin{equation}\label{eq:spherical-building-many-symmetries-1}
\frac{|G|}{|G_\rho|} =
\left|\faktor{G}{G_\rho}\right| \ge
\frac{|X(d)|}{|\{\sigma \in X(d) \;|\; \rho \subseteq \sigma\}|},
\end{equation}
where the equality follows by Lagrange's theorem. Next, note that for any $\rho' \in X$, there are $|G_\rho|$ elements $g \in G$ for which $g\rho = \rho'$. Therefore, for any $(\sigma,\tau) \in \mathcal{F}_k$,
$$
|\{g \in G \;|\; g\rho \in A_{\sigma,\tau}\}| \le |A_{\sigma,\tau}|\cdot|G_\rho| \le
\theta_d \frac{|\{\sigma \in X(d) \;|\; \rho \subseteq \sigma\}|}{|X(d)|} |G|,
$$
where the second inequality follows by~\eqref{eq:spherical-building-many-symmetries-1}.
\end{proof}

We can now prove proposition~\ref{pro:spherical-building-symmetrical}.

\begin{proof}[Proof of proposition~\ref{pro:spherical-building-symmetrical}]
Note that for any $\rho \in X$ and $g \in G$,
\begin{equation}
\sum_{\substack{(\sigma,\tau) \in \mathcal{F}_k:\\\rho \in A_{\sigma,\tau}}}\norm{\tau} =
\sum_{\substack{(g\sigma,g\tau) \in \mathcal{F}_k:\\g\rho \in gA_{\sigma,\tau}}}\norm{\tau} =
\sum_{\substack{(g\sigma,g\tau) \in \mathcal{F}_k:\\g\rho \in A_{g\sigma,g\tau}}}\norm{\tau} =
\sum_{\substack{(\sigma,\tau) \in \mathcal{F}_k:\\g\rho \in A_{\sigma,\tau}}}\norm{\tau}.
\end{equation}
Thus, it is possible to change the order of summation, i.e.,
$$
\sum_{g \in G} \sum_{\substack{(\sigma,\tau) \in \mathcal{F}_k:\\\rho \in A_{\sigma,\tau}}} \norm{\tau} =
\sum_{(\sigma,\tau) \in \mathcal{F}_k} \sum_{\substack{g \in G:\\g\rho \in A_{\sigma,\tau}}} \norm{\tau}.
$$

It follows that for any $\rho \in X$,
\begin{align*}
\sum_{\substack{(\sigma,\tau) \in \mathcal{F}_k:\\\rho \in A_{\sigma,\tau}}} \norm{\tau} &=
\frac{1}{|G|} \sum_{g \in G} \sum_{\substack{(\sigma,\tau) \in \mathcal{F}_k:\\\rho \in A_{\sigma,\tau}}} \norm{\tau} =
\frac{1}{|G|} \sum_{(\sigma,\tau) \in \mathcal{F}_k} \sum_{\substack{g \in G:\\g\rho \in A_{\sigma,\tau}}} \norm{\tau} \\&=
\sum_{(\sigma,\tau) \in \mathcal{F}_k} \norm{\tau} \frac{|\{g \in G \;|\; g\rho \in A_{\sigma,\tau}\}|}{|G|} \\&\le
\theta_d \frac{|\{\sigma \in X(d) \;|\; \rho \subseteq \sigma\}|}{|X(d)|} \sum_{(\sigma,\tau) \in \mathcal{F}_k} \norm{\tau} \\[5pt]&=
\theta_d |\{\sigma \in X(d) \;|\; \rho \subseteq \sigma\}|,
\end{align*}
where the inequality follows by lemma~\ref{lem:spherical-building-many-symmetries}.
\end{proof}

\bibliographystyle{alpha}
\bibliography{Bibliography}

\end{document}